\newcommand{\com}[1]{}
\newtheorem{definition}{Definition}
\newtheorem{lemma}{Lemma}
\newtheorem{theorem}{Theorem}
\newenvironment{proof}{\noindent\textit{Proof. }}{\hfill $\square$\\}
\begin{document}

\title{Rule-based transformations for geometric modelling}

\author{Thomas Bellet
\institute{University of Poitiers, XLIM-SIC CNRS, France}
\email{thomas.bellet@univ-poitiers.fr}
\and
Agnès Arnould
\institute{University of Poitiers, XLIM-SIC CNRS, France}
\email{agnes.arnould@univ-poitiers.fr}
\and
Pascale Le Gall
\institute{Ecole Centrale Paris, MAS, France}
\email{pascale.legall@ecp.fr}
}
\def\titlerunning{Rule-based transformations for geometric modelling}
\def\authorrunning{T. Bellet, A. Arnould \& P. Le Gall}


\maketitle


\begin{abstract}

The context of this paper is the use of formal methods for topology-based geometric modelling. Topology-based geometric modelling deals with objects of various dimensions and shapes. Usually, objects are defined by a graph-based topological data structure and by an embedding that associates each topological element (vertex, edge, face, etc.) with relevant data as their geometric shape (position, curve, surface, etc.) or application dedicated data (e.g. molecule concentration level in a biological context). We propose to define topology-based geometric objects as labelled graphs. The arc labelling defines the topological structure of the object whose topological consistency is then ensured by labelling constraints. Nodes have as many labels as there are different data kinds in the embedding.  Labelling constraints ensure then that the embedding is consistent with the topological structure. Thus, topology-based geometric objects constitute a particular subclass of a category of labelled graphs in which nodes have multiple labels.

We previously introduced a formal approach of topological modelling based on graph transformation rules.  Topological operations, that only modify the topological structure of objects, can be defined such that the topological consistency of constructed objects is ensured with syntactic conditions on rules.  In this paper, we follow the same approach in order to deal with geometric operations, that can modify both the topological structure and the embedding. Thus, we define syntactic conditions on rules to ensure the consistency of the embedding during transformations.

\end{abstract}

\section*{Introduction}

Topology-based geometric modelling deals with the manipulation (construction, modification, \ldots) of objects that are subdivided according to their topological structure. The topological structure is the cell subdivision (vertices, edges, faces, volumes) of objects and the adjacency relations between these cells. Among the existing topological models, we choose in this paper the model of generalized maps \cite{Lienhardt89,Lienhardt94}, also called G-maps. 
The topological structure of G-maps can be represented by a graph where edges indicate which nodes are neighbours and where edge labels indicate what kind of neighbouring is concerned ({\em e.g.} connection between faces or between volumes). This graph must satisfy some constraints on the arc labelling to ensure the topological consistency of the topological structure. For example, while their shapes are different, the three objects of Fig.~\ref{fig:diff_ebd_topo} have the same topological structure: a closed face that contains four edges and four vertices. 

\begin{figure}[h]
    \centering
        \includegraphics[width=70mm]{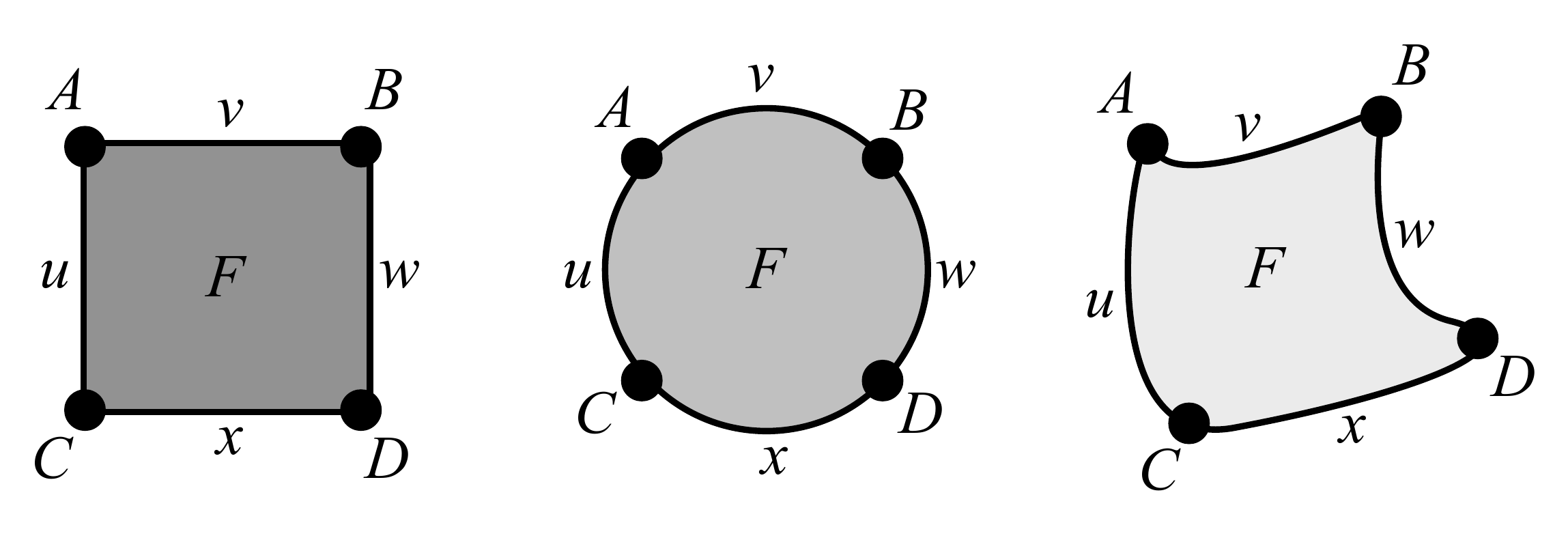}
        \vspace{-0.3cm}
    \caption{Three objects with a same topological structure}
    \label{fig:diff_ebd_topo}
\end{figure}

In addition to the topological structure, objects are defined by an embedding that includes all other kinds of information attached to the topological cells of the object. An evident example of embedding is given by the kinds of information needed to capture the shape of objects. For the objects of  Fig.~\ref{fig:diff_ebd_topo}, we assume that the associated embedding contains three elements:
\begin{itemize}
\item geometric points (defined by 2 dimension coordinates in the case of plane objects) that are attached to topological vertices;
\item curves that are attached to edges;
\item colors that are attached to faces.
\end{itemize}
While the topological structure is represented by the arc labels, the different elements of the object embedding can be represented by node labels. Intuitively, a node is labelled by all the embedding elements that are attached to its adjacent cells (vertex, edges, faces, volumes). Let us point out that while the embedding generally contains classical geometric data describing the shape of the objects (e.g. points, curves, surfaces, etc.), the embedding also contains specific data that depend on the targeted application ({\em e.g.} molecule concentration for biology, rock density for geology, material for architecture, etc.). Actually, nodes have as many labels as there are different kinds of data in the application-oriented definition of the embedding.  This fact explains that in a first step, we provide in Section~\ref{sec:1-multi_graph} a category of graphs whose nodes can carry multiple labels. This category is defined as a direct extension of the category of partially labelled graphs as defined in \cite{Habel-Plump02}. We then define in Section~\ref{sec:2-gmap} our embedded topological model as particular graphs of this  category.  Graphs that represent embedded G-maps have to satisfy constraints to ensure both the topological consistency and the embedding consistency.

To define operations on objects represented as embedded G-maps, we choose the graph transformations and more precisely the so-called double-pushout approach~\cite{ehrig2006}.  In a previous work~\cite{PACL2008_2394}, we defined a rule-based language dedicated to topology-based modelling. The first interest of this language is that we defined syntactic conditions on rules to ensure by construction, that the application of a rule to a G-map produces a G-map.
In other words, objects resulting from applications of well-formed rules on G-maps are systematically well-formed topological objects, that is objects satisfying the topological consistency constraints. In this paper, we define a similar framework for geometric operations that can modify both the topological structure and the embedding. As we need to  change labels of nodes and arcs during transformations, we based our work on rules of \cite{Habel-Plump02} that allow to rename labels. In Section~\ref{sec:3-gmap_rule}, similarly to the topological conditions introduced in~\cite{PACL2008_2394}, we define syntactic conditions on rules that ensure the preservation of the embedding constraints when rules are applied to embedded G-maps.  In Section~\ref{sec:4-rule_scheme}, we provide G-map rule schemes that allow to define generic geometric operations. Actually, rule schemes contain expressions on variables to allow to compute the embedding of the resulting objects. Finally, we provide syntactic conditions on rule schemes to ensure the preservation of embedding consistency by rule application.

\section{Transformation rules for  $I$-labelled graphs}
\label{sec:1-multi_graph}

\subsection{Category of  $I$-labelled graphs}
In this section, we define the category of  $I$-labelled graphs as an extension of the one of partially labelled graphs defined in~\cite{Habel-Plump02}. While in~\cite{Habel-Plump02} nodes have at most one label, in our case, nodes can have at most $|I|$ labels where $I$ is a chosen set of indexes.

\begin{definition}[ $I$-labelled graph]
\label{def:graphe}
Let $(\mathcal{C}_{V,i})_{i \in I}$ be a family of node label sets and $\mathcal{C}_E$ be an arc label set.
A {\em  $I$-labelled graph} $G^I = (V, E, s, t, (l_{V,i})_{i \in I}, l_E)$ upon $(\mathcal{C}_{V,i})_{i \in I}$ and $\mathcal{C}_E$ is defined as:
\begin{itemize}
\item a set $V$ of nodes;
\item a set $E$ of arcs;
\item two functions \textup{source} $s : E \rightarrow V$ and \textup{target} $t : E \rightarrow V$. For $e \in E$, $s(e)$ and $t(e)$ are respectively the \textup{source node}  and the \textup{target node} of $e$;
\item a family of partial functions\footnote{Given $X$ and $Y$ two sets, a partial function $f$ from $X$ to $Y$ is a total function $f: X' \rightarrow Y$, from $X'$ a subset of $X$. $X'$ is called the domain of $f$, and is denoted by $Dom(f)$. For $x \in X - Dom(f)$, we say that $f(x)$ is undefined, and write $f(x) = \bot$. We also note $\bot : X \rightarrow Y$ the function totally undefined, that is $Dom(\bot) = \emptyset$.}  $(l_{V,i} : V \rightarrow \mathcal{C}_{V,i})_{i \in I}$  that label nodes. For $v \in V$, when it exists, $l_{V,i}(v)$ is called the $i$-label of $v$ ;
\item a partial function $l_E : E \rightarrow \mathcal{C}_E$ that labels arcs.
\end{itemize}
\end{definition}

For a graph $G^I = (V, E, s, t, (l_{V,i})_{i \in I}, l_E)$, elements of the tuple can be indexed by $G$ to make explicit the graph name: $V_G$ for $V$ for example. The above definition is a natural extension of partially labelled graphs of \cite{Habel-Plump02}. Indeed, instead of a unique partial function $l_V$ that labels nodes, we consider an $I$-indexed family $(l_{V,i})_{i \in I}$ of partial labelling functions\footnote{To better fit with the frame of  \cite{Habel-Plump02}, one would think to label nodes by a unique label made of a Cartesian product, instead of having a family of labelling functions. But, such an approach would not allow us to have the possibility of labelling a node simultaneously by a defined $i$-label  and by an undefined $i'$-label  for $i$ and $i'$ indexes of $I$.}.  By extending the definition given in \cite{Habel-Plump02} for a unique node labelling function, an $I$-labelled morphism $g : G^I \rightarrow G'^I$ between  $I$-labelled graphs $G^I$ and $G'^I$ is defined by two functions $g_V : V_G \rightarrow V_{G'}$ and $g_E : E_G \rightarrow E_{G'}$ preserving sources, targets and labels : $s_{G'} \circ g_E = g_V \circ s_G$, $t_{G'} \circ g_{E} = g_V \circ t_G$, for all $x$ in $Dom(l_{G,E})$ , $l_{G',E}(g_E(x)) = l_{G,E}(x)$ and lastly, for all $i$ in $I$, for all $x$ in $Dom(l_{G,V,i})$, $l_{G',V,i}(g_V(x)) = l_{G,V,i}(x)$. Thus, the only difference with \cite{Habel-Plump02} is that for  $I$-labelled graphs, $I$-labelled morphisms have more labels to preserve. An $I$-labelled morphism $g : G^I \rightarrow G'^I$ is an inclusion if $\forall x \in E_G, g_E(x) =x$ and $\forall x \in V_G, g_V(x)=x$. Such an inclusion is then denoted as $g : G^I \hookrightarrow G'^I$.  $I$-labelled graphs and $I$-labelled morphisms  constitute a category, where morphism composition is defined componentwise as function composition.

For any  partially labelled graph $G = (V,E,s,t,l_V,l_E)$, we call the base of $G$ the  partially labelled graph defined as $(V, E, s, t, \bot, l_E)$ whose node labelling is totally undefined and denote it by~$G_{\bot}$.

We say that two morphisms $g : G \rightarrow G'$ and $h : H \rightarrow H'$ between  partially labelled graphs have the same base  if $G_{\bot} = H_{\bot}$, $G'_{\bot} = H'_{\bot} $ and $g_V = h_V$, $g_E = h_E$. We note $g_{\bot} : G_{\bot} \rightarrow H_{\bot}$ the derived morphism defined by  ${g_{\bot}}_ E = g_E$ and ${g_{\bot}}_ V = g_V$.

We respectively note ${\cal G}$, ${\cal G}^I$ and ${\cal G}^{\bot}$ the category of partially labelled graphs (as defined in \cite{Habel-Plump02}),  $I$-labelled graphs and bases of  partially labelled graphs (that is, graphs whose node labelling is the function totally undefined).

For a  $I$-labelled graph $G^I= (V, E, s, t, (l_{V,i})_{i \in I}, l_E)$, for an index $i \in I$, the projection  $proj_i(G^I)$, also called the $i$-component, is defined as the  partially labelled graph $ (V, E, s, t, l_{V,i}, l_E)$ according to \cite{Habel-Plump02}.  Similarly, for an $I$-labelled morphism $g : G^I \rightarrow G'^I$, we call $proj_i(g) : proj_i(G^I) \rightarrow proj_i(G'^I)$ the graph morphism that only consider the $i$-labels of the  $I$-labelled graphs. 

From an $I$-indexed family of  partially labelled graphs $G_i$  defined on a common base $(V,E,s,t,\bot,l_E)$ with $l_{V,i}$ as node labelling function, we define by $Prod_{i \in I} G_i$ the  $I$-labelled graph $(V, E, s, t, (l_{V,i})_{i \in I}, l_E)$. Similarly, from an $I$-indexed family of  graph morphisms $g_i : G_i \rightarrow G'_i$ sharing the same base, we can define an $I$-labelled morphism $Prod_{i \in I} g_i$, from $Prod_{i \in I} G_i$ to $Prod_{i \in I} G'_i$, that coincides with any $g_i$ on the node set $V_G$ and the arc set $E_G$. Obviously, we then get the identities: $G^I = Prod_{i \in I} proj_i(G^I)$ for $G$ a  $I$-labelled graph and $g^I = Prod_{i \in I} proj_i(g^I)$ for $g^I$ an $I$-labelled morphism.

Since from any  partially labelled graphs $F$, $G$, $H$, ... and from any morphisms on them $f : F \rightarrow G$, $g : G \rightarrow H$, $h : F \rightarrow H$, we can derive their corresponding base form, respectively $F_{\bot}$, $G_{\bot}$, $H_{\bot}$, ...  $f_{\bot} : F_{\bot} \rightarrow G_{\bot}$, $g_{\bot} : G_{\bot} \rightarrow H_{\bot}$, $h_{\bot} : F_{\bot} \rightarrow H_{\bot}$, and then for any diagram made of morphisms expressed on partially labelled graphs, we can derive a similar diagram on their corresponding base. For example, from the diagram $ F \stackrel{f}{\rightarrow} G \stackrel{g}{\rightarrow} H =  F \stackrel{h}{\rightarrow} H$, we can derive the diagram $ F_{\bot} \stackrel{f_{\bot}}{\rightarrow} G_{\bot} \stackrel{g_{\bot}}{\rightarrow} H_{\bot} =  F_{\bot} \stackrel{h_{\bot}}{\rightarrow} H_{\bot}$.

\begin{lemma} 
\label{samebase}
For $m=1$, $2$, let us consider $f_m : A_m \rightarrow B_m$ and $g_m : A_m \rightarrow C_m$ two graph morphisms in ${\cal G}$ such that $g_m$ is injective and for all $x$ in $V_{B_m}$ (resp. in $E_{B_m}$), 
$\{l_{B_m, V}(x)\} \cup l_{C_m,V}({g_V}_m({f_V}_m^{-1}(x)))$  (resp. $\{l_{B_m,E}(x)\} \cup l_{C_m,E}({g_E}_m({f_E}_m^{-1}(x)))$) contains at most one element, then there exists a graph $D_m$ and graph morphisms $f'_m : C_m \rightarrow D_m$ and $g'_m : B_m \rightarrow D_m$ such that the following diagram is a pushout\footnote{A commutative diagram $A \stackrel{g}{\rightarrow} C \stackrel{f'}{\rightarrow} D = A \stackrel{f}{\rightarrow} B \stackrel{g'}{\rightarrow} D$ is a  pushout if and if for every graph $X$ and all morphisms $h : B \rightarrow X$ and $k : C \rightarrow X$ with $k \circ g = h \circ f$, there is an unique morphism $x : D \rightarrow X$ with $x \circ g' = h$ and $x \circ f' = k$.}
   \vspace{-0.1cm}
\begin{center}~
\xymatrix{  A_m \ar[d]^{g_m} \ar[r]_{f_m} & B_m \ar[d]_{g'_m}	\ar@/^/[ddr]^{h_m} & \\
C_m \ar[r]^{f'_m}  \ar@/_/[drr]_{k_m} &D_m \ar[dr]^{x_m} & \\
& & X_m \\	}
\end{center}

Moreover, if both pushout diagrams  have the same underlying base diagram, that is 
${A_1}_{\bot} = {A_2}_{\bot}$, ${B_1}_{\bot} = {B_2}_{\bot}$, ${C_1}_{\bot} = {C_2}_{\bot}$, 
${f_1}_{\bot} = {f_2}_{\bot}$ and ${g_1}_{\bot} = {g_2}_{\bot}$, then we get ${D_1}_{\bot} = {D_2}_{\bot}$, ${f'_1}_{\bot}  = {f'_2}_{\bot}$  and ${g'_1}_{\bot} = {g'_2}_{\bot}$.  
\end{lemma}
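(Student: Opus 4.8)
The plan is to prove the first assertion by exhibiting a concrete, canonically chosen pushout, and then to obtain the second assertion by inspecting which data that construction actually consumes.

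For the existence of the pushout, I would take for $D_m$ the usual componentwise quotient: $V_{D_m}$ (resp.\ $E_{D_m}$) is $(V_{B_m} \sqcup V_{C_m})$ (resp.\ $(E_{B_m} \sqcup E_{C_m})$) modulo the equivalence generated by identifying ${f_V}_m(a)$ with ${g_V}_m(a)$ for every $a \in V_{A_m}$ (resp.\ identifying ${f_E}_m(\alpha)$ with ${g_E}_m(\alpha)$ for every $\alpha \in E_{A_m}$), with sources and targets inherited and with $g'_m$, $f'_m$ the canonical maps into the quotient. Here the injectivity of $g_m$ is what makes the labelling work: it forces every node class either to have the plain shape $\{x\} \cup {g_V}_m({f_V}_m^{-1}(x))$ for a unique $x \in V_{B_m}$ (no longer zig-zags occur, because the only generator meeting ${g_V}_m(a)$ is the one coming from $a$ itself), or to be a singleton $\{c\}$ with $c \in V_{C_m} \setminus {g_V}_m(V_{A_m})$, and similarly for arcs. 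I would then label the class of such an $x$ by the unique element of $\{l_{B_m, V}(x)\} \cup l_{C_m,V}({g_V}_m({f_V}_m^{-1}(x)))$ when this set is non-empty — which is legitimate exactly because of the hypothesis that it has at most one element — leaving it undefined otherwise, and symmetrically for arc classes; singleton classes $\{c\}$ keep the label they carry in $C_m$.

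It then remains to check the three routine points. That $g'_m$ and $f'_m$ are morphisms of ${\cal G}$ is immediate, since by construction the label of a class is exactly the common value of the defined labels of its members. Commutativity of the square is the defining property of the quotient. For the universal property, given $h_m : B_m \to X_m$ and $k_m : C_m \to X_m$ with $k_m \circ g_m = h_m \circ f_m$, the map $x_m$ is forced on the underlying graph and is well defined there precisely because $k_m \circ g_m = h_m \circ f_m$; it preserves labels because a defined label of a node class $[x]$ is either the label of $x$ in $B_m$ (preserved via $h_m$), or the label of some ${g_V}_m(a)$ with ${f_V}_m(a)=x$ (preserved via $k_m$, using $k_m \circ g_m = h_m \circ f_m$ to match the two descriptions of $x_m$), and likewise for arcs; uniqueness is inherited from the underlying graph level.

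Finally, for the ``moreover'' part it suffices to observe that in the construction above the base ${D_m}_{\bot}$ — that is, the node set, arc set, sources, targets and \emph{arc} labelling of $D_m$ — together with the underlying morphisms ${f'_m}_{\bot}$ and ${g'_m}_{\bot}$, is computed using only the node sets, arc sets, sources, targets and arc labellings of $A_m$, $B_m$, $C_m$ and the node/arc components of $f_m$ and $g_m$; in other words it is a function of the base diagram ${B_m}_{\bot} \leftarrow {A_m}_{\bot} \to {C_m}_{\bot}$ alone, the node labelling of $D_m$ being a purely subsequent decoration that feeds back into nothing. Hence, if ${A_1}_{\bot} = {A_2}_{\bot}$, ${B_1}_{\bot} = {B_2}_{\bot}$, ${C_1}_{\bot} = {C_2}_{\bot}$, ${f_1}_{\bot} = {f_2}_{\bot}$ and ${g_1}_{\bot} = {g_2}_{\bot}$, the two base diagrams are literally equal and so are their images under this construction: ${D_1}_{\bot} = {D_2}_{\bot}$, ${f'_1}_{\bot} = {f'_2}_{\bot}$ and ${g'_1}_{\bot} = {g'_2}_{\bot}$. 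The one genuinely delicate step is the soundness of the labelling of $D_m$: it is exactly there that the injectivity of $g_m$ is essential, since it is what reduces the equivalence classes enough for the per-element hypothesis to be strong enough; the rest is bookkeeping and standard diagram chasing.
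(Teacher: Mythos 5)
Your proposal is correct and follows essentially the same route as the paper: the paper simply cites \cite{Habel-Plump02} for the existence of the pushout and observes that the construction there produces ${D_m}_{\bot}$, ${f'_m}_{\bot}$, ${g'_m}_{\bot}$ as a function of the base diagram alone, which is exactly the quotient-of-disjoint-union construction you spell out explicitly (with injectivity of $g_m$ collapsing the equivalence classes to $\{x\} \cup {g_V}_m({f_V}_m^{-1}(x))$ so that the at-most-one-label hypothesis suffices). The only difference is that you inline the construction the paper delegates to its reference; the ``moreover'' argument is identical in both.
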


\begin{proof}
The proof of the existence of pushout is given in \cite{Habel-Plump02}. 

The uniqueness of the base elements   ${D_m}_{\bot}$, ${f'_m}_{\bot}$  and ${g'_m}_{\bot}$ comes from the fact that the proof in \cite{Habel-Plump02} explicitly constructs the elements ${D_m}_{\bot}$,  ${f'_m}_{\bot}$  and ${g'_m}_{\bot}$ in relation to the elements of the base diagram. 
\end{proof}

For convenience issues, we note $B_m +_{A_m} C_m$ the graph $D_m$, occurring in the pushout diagram.   

\begin{lemma}[Existence of pushouts]
Let $f^I : A^I \rightarrow B^I$ and $g^I : A^I \rightarrow C^I$ be two $I$-labelled morphisms in ${\cal G}^I$ such that $g^I$ is injective and for all $x$ in $V_{B}$ (resp. in $E_{B}$), for all $i$ in $I$, 
$\{{l_{B,V,i}}(x)\} \cup {l_{C,V,i}}({g_{V,i}}({f_{V,i}}^{-1}(x)))$  (resp. $\{{l_{B,E}}(x)\} \cup {l_{C,E}}({g_E}({f_E}^{-1}(x)))$) contains at most one element, then there exists a  $I$-labelled graph $D^I$ and two $I$-labelled morphisms $f'^I: C^I \rightarrow D^I$ and $g'^I : B^I \rightarrow D^I$  in ${\cal G}^I$ such that the following diagram is a pushout:
   \vspace{-0.6cm}
\begin{center}~
\xymatrix{  A^I \ar[d]^{g^I } \ar[r]_{f^I} & B^I \ar[d]_{g'^I}	\ar@/^/[ddr]^{h^I} & \\
C^I \ar[r]^{f'^I}  \ar@/_/[drr]_{k^I} &D^I \ar[dr]^{x^I} & \\
& & X^I \\	}
\end{center}
Moreover $D^I$ can be defined as $Prod_{i \in I}  D_i$ with $D_i = proj_i(B^I) +_{proj_i(A^I)} proj_i(C^I)$
\end{lemma}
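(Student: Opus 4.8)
The plan is to reduce everything to the $i$-components, invoke Lemma~\ref{samebase} for each index to get a pushout in ${\cal G}$, and then recombine these pushouts with the $Prod$ construction, the key enabling fact being that Lemma~\ref{samebase} makes the \emph{base} of a pushout in ${\cal G}$ depend only on the base of the input span. First I would project the given span along each $i \in I$: put $A_i = proj_i(A^I)$, $B_i = proj_i(B^I)$, $C_i = proj_i(C^I)$, $f_i = proj_i(f^I)$ and $g_i = proj_i(g^I)$. Since $proj_i$ leaves the underlying node and arc maps unchanged, $g_i$ is injective; and the ``at most one element'' hypothesis of the present lemma, read at the single index $i$, is exactly the hypothesis of Lemma~\ref{samebase} for the span $B_i \xleftarrow{f_i} A_i \xrightarrow{g_i} C_i$ (the arc condition being index-independent, as there is a single arc labelling $l_E$). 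Hence Lemma~\ref{samebase} supplies a pushout $D_i = B_i +_{A_i} C_i$ with morphisms $f'_i : C_i \rightarrow D_i$ and $g'_i : B_i \rightarrow D_i$ in ${\cal G}$.

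Next I would note that all these pushouts rest on one and the same underlying base diagram: ${A_i}_{\bot}$, ${B_i}_{\bot}$, ${C_i}_{\bot}$, ${f_i}_{\bot}$, ${g_i}_{\bot}$ coincide with the bases of $A^I$, $B^I$, $C^I$ and with $f^I_{\bot}$, $g^I_{\bot}$, none of which depends on $i$. The ``moreover'' part of Lemma~\ref{samebase} then gives ${D_i}_{\bot} = {D_j}_{\bot}$, ${f'_i}_{\bot} = {f'_j}_{\bot}$ and ${g'_i}_{\bot} = {g'_j}_{\bot}$ for all $i, j \in I$. So the families $(D_i)_{i\in I}$, $(f'_i)_{i\in I}$ and $(g'_i)_{i\in I}$ are all defined on a common base, and $D^I := Prod_{i \in I} D_i$, $f'^I := Prod_{i\in I} f'_i$, $g'^I := Prod_{i\in I} g'_i$ are well-defined, with $proj_i(D^I) = D_i$, $proj_i(f'^I) = f'_i$ and $proj_i(g'^I) = g'_i$. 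Commutativity $f'^I \circ g^I = g'^I \circ f^I$ follows since at every index both sides project to $f'_i \circ g_i = g'_i \circ f_i$, and an $I$-labelled morphism equals the $Prod$ of its projections.

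Finally, for the universal property, take $h^I : B^I \rightarrow X^I$ and $k^I : C^I \rightarrow X^I$ with $k^I \circ g^I = h^I \circ f^I$, and project: for each $i$ the pushout $D_i$ yields a unique $x_i : D_i \rightarrow X_i$ with $x_i \circ g'_i = proj_i(h^I)$ and $x_i \circ f'_i = proj_i(k^I)$. The one genuine obstacle is to check that the $x_i$ glue into a single $I$-labelled morphism, i.e.\ that ${x_i}_{\bot}$ does not depend on $i$; I expect this to be the crux. On bases, ${x_i}_{\bot}$ is forced: it agrees with $h^I_{\bot}$ on the image of ${g'_i}_{\bot}$ and with $k^I_{\bot}$ on the image of ${f'_i}_{\bot}$; these two base images are the same for all $i$ by the previous paragraph, and, since the underlying graph of $D_i$ is the pushout of the underlying graph span, ${f'_i}_{\bot}$ and ${g'_i}_{\bot}$ are jointly surjective on it. Hence ${x_i}_{\bot}$ is independent of $i$, so $x^I := Prod_{i\in I} x_i : D^I \rightarrow X^I$ is a well-defined $I$-labelled morphism, and $x^I \circ g'^I = h^I$, $x^I \circ f'^I = k^I$ hold since they do so after every projection. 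Uniqueness is immediate: any $x'^I$ with the same property projects to a morphism $proj_i(x'^I)$ satisfying the defining equations of $x_i$, hence $proj_i(x'^I) = x_i$ for all $i$, whence $x'^I = Prod_{i\in I} x_i = x^I$. The displayed identity $D^I = Prod_{i\in I}\big(proj_i(B^I) +_{proj_i(A^I)} proj_i(C^I)\big)$ is then exactly the construction just carried out.

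\hfill $\square$\\
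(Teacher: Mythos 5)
Your proposal is correct and follows essentially the same route as the paper's own proof: project the span along each index $i$, invoke Lemma~\ref{samebase} to obtain the componentwise pushouts $D_i$ on a common base, and reassemble with $Prod_{i\in I}$. You are in fact slightly more careful than the paper at the universal-property step, where you justify that the ${x_i}_{\bot}$ coincide (via joint surjectivity of ${f'_i}_{\bot}$ and ${g'_i}_{\bot}$) before forming $Prod_{i\in I}x_i$ --- a point the paper's proof takes for granted.
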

 
 \begin{proof}
 By lemma~\ref{samebase}, we know that $(D_i)_{i \in I}$ have the same base because $(proj_i(A))_{i \in I}$,  $(proj_i(B))_{i \in I}$ and  $(proj_i(C))_{i \in I}$ have respectively the same base. Thus, $Prod_{i \in I}  D_i$ is a well defined  $I$-labelled graph.
 
Moreover, there exist $I$-labelled  morphisms $f'^I : C^I \rightarrow D^I$ and $g'^I : B^I \rightarrow D^I$ in ${\cal G}^I$ ensuring that the diagram is commutative. It suffices to choose : $f'^I =  Prod_{i \in I} f'_i$ and $g'^I =  Prod_{i \in I} f'_i$ where $f'_i:  proj_i(B) \rightarrow D_i$ and $g'_i : proj_i(C) \rightarrow D_i$ are the underlying morphisms constituting the pushout construction :  $D_i = proj_i(B) +_{proj_i(A)} proj_i(C)$.

Let us show the universal property : let us consider  $k^I : C^I \rightarrow X^I$ and $h^I : B^I \rightarrow X^I$ two  $I$-labelled graphs with $h^I \circ f^I = k^I \circ g^I$. By the universal property of $D_i$, there exists a unique  labelled morphism $x_i :  D_i \rightarrow proj_i(X^I)$ such that $x_i \circ proj_i(f'^I) = x_i \circ proj_i(g'^I)$. Then we can consider $x^I = Prod_{i \in I} x_i : D^I \rightarrow X^I$ verifying $x^I \circ f'^I = x^I \circ g'^I$.  
 \end{proof}
 
 Thus, constructions holding on partially labelled graphs can be replicated at the level of  $I$-labelled graphs. It suffices to work with their $i$-components, index per index, using the $proj_i$ application and to reconstruct $I$-labelled graphs or morphisms by applying the $Prod_{i \in I}$ operator on objects sharing the same base.
 
In the sequel, we take benefit of all results given in \cite{Habel-Plump02} : existence of pullbacks, characterisation of natural pushouts\footnote{A natural pushout is both a pushout and a pullback.}. For the purpose of simplicity, we give up the exponent $I$ upon the  $I$-labelled graph (resp. morphism) names and we will use $I$-labelled inclusions to define rules.

 \begin{definition}[graph transformation rule]
\label{def:regle}
A {\em graph transformation rule} $r : L \hookleftarrow K \hookrightarrow R$ over ${\cal G}^I$ consists of two $I$-labelled graph inclusions $K \hookrightarrow L$ and $K \hookrightarrow R$  in ${\cal G}^I$ such that:
\begin{enumerate}
\item
for all node $x \in V_L$ and all $i \in I$, $l_{L,V,i}(x) = \bot$ implies $x \in V_K$ and $l_{R,V,i}(x) = \bot$;
reciprocally, for all node $x \in V_R$ and all $i \in I$, $l_{R,V,i}(x) = \bot$ implies $x \in V_K$ and $l_{L,V,i}(x) = \bot$;
\item
for all arc $x \in E_L$, $l_{L,E}(x) = \bot$ implies $x \in E_K$ and $l_{R,E}(x) = \bot$;
reciprocally, for all arc $x \in E_R$, $l_{R,E}(x) = \bot$ implies $x \in E_K$ and $l_{L,E}(x) = \bot$;
\end{enumerate}
Usually, $L$ is called the {\em left-hand side}, $R$ the {\em right-hand side} and $K$ the {\em kernel}.
\end{definition}

\begin{definition}[direct transformation]
\label{def:transformation_directe}
Let $r : L \hookleftarrow K \hookrightarrow R$ be a graph transformation rule over ${\cal G}^I$ and $G$ a $I$-labelled graph and $m : L \rightarrow G$ an injective $I$-labelled morphism in ${\cal G}^I$ called {\em match morphism}.

A {\em direct transformation} $G \overset{r,m}{\Rightarrow} H$ of $G$ into $H$ consists in the following natural double pushout defined over ${\cal G}^I$~:
   \vspace{-0.5cm}
\begin{center}~
\xymatrix{
     L \ar@{<-^{)}}[r]  \ar@{->}[d]_{m}^*+{~~(1)} &
    	 K \ar@{^{(}->}[r]  \ar@{->}[d]^*+{~~(2)} &
	    R  \ar@{->}[d] \\
	G \ar@{<-^{)}}[r]   &
    	 D \ar@{^{(}->}[r]  &
	    H \\}
    \end{center}
\end{definition}

\begin{definition}[dangling condition]
An $I$-labelled morphism  $m : L \rightarrow G$ satisfies the {\em dangling condition} with respect to the  inclusion $K \hookrightarrow L$, if none node of $m(L) \backslash m(K)$ is source or target of an arc of $G \backslash m(L)$.
\end{definition}

\begin{theorem}[Existence and uniqueness of direct transformation]
\label{theo:existence_transformation}
Let $r : L \hookleftarrow K \hookrightarrow R$ be a rule and $m : L \rightarrow G$ a match morphism in ${\cal G}^I$, the previous direct transformation $G\Rightarrow^{r,m}H$ exists if and only if $m$ satisfies the dangling condition. Moreover, in this case $D$ and $H$ are unique up to isomorphism. 
\end{theorem}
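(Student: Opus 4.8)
The plan is to reduce the statement to the existence-and-uniqueness theorem for partially labelled graphs proved in \cite{Habel-Plump02}, by working index per index through the projections $proj_i$ and reassembling with $Prod_{i \in I}$, exactly as announced after the pushout lemmas above. Two preliminary remarks make this reduction legitimate. First, the dangling condition constrains only the underlying base graphs, so $m$ satisfies it with respect to $K \hookrightarrow L$ if and only if $proj_i(m)$ satisfies it with respect to $proj_i(K) \hookrightarrow proj_i(L)$, for one --- hence every --- index $i \in I$. Second, since the rule conditions of Definition~\ref{def:regle} are stated index-wise, each projected span $proj_i(L) \hookleftarrow proj_i(K) \hookrightarrow proj_i(R)$ is a graph transformation rule over ${\cal G}$ in the sense of \cite{Habel-Plump02}, and $proj_i(m)$ is an injective match.

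For the implication from the dangling condition to existence, fix $i \in I$. By \cite{Habel-Plump02} there is a pushout complement $D_i$ of $proj_i(K) \hookrightarrow proj_i(L)$ along $proj_i(m)$ --- its existence being precisely guaranteed by the dangling condition, since $proj_i(m)$ is injective --- together with a pushout $H_i = proj_i(R) +_{proj_i(K)} D_i$; the label-compatibility hypothesis of the lemma on existence of pushouts needed for this second pushout follows from conditions~(1)--(2) of Definition~\ref{def:regle}, because a relabelled $K$-node or $K$-arc carries an undefined label in $K$, hence in $D_i$, so no clash occurs. The base diagrams $proj_i(L)_{\bot} \hookleftarrow proj_i(K)_{\bot} \to proj_i(G)_{\bot}$ all coincide with $L_{\bot} \hookleftarrow K_{\bot} \to G_{\bot}$, and the pushout and pushout-complement constructions of \cite{Habel-Plump02} are explicit on bases, so the graphs $(D_i)_{i \in I}$ share a common base, and likewise the $(H_i)_{i \in I}$ together with all the connecting morphisms. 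Hence $D := Prod_{i \in I} D_i$ and $H := Prod_{i \in I} H_i$ are well-defined $I$-labelled graphs; the lemma on existence of pushouts makes squares~(1) and~(2) pushouts over ${\cal G}^I$, and applying the inherited results on pullbacks and on the characterisation of natural pushouts of \cite{Habel-Plump02} component-wise makes each square a pullback as well, so the double pushout is natural and $G \Rightarrow^{r,m} H$ is a direct transformation.

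Conversely, suppose a direct transformation $G \Rightarrow^{r,m} H$ exists. Projecting square~(1) onto the $i$-component exhibits $proj_i(D)$ as a pushout complement of $proj_i(K) \hookrightarrow proj_i(L)$ along $proj_i(m)$ in ${\cal G}$; by \cite{Habel-Plump02}, such a complement can exist only if $proj_i(m)$, hence $m$, satisfies the dangling condition. For uniqueness, let $(D', H')$ be any other solution over ${\cal G}^I$. For each $i$ the pair $(proj_i(D'), proj_i(H'))$ solves the $i$-component double-pushout problem, so by the uniqueness part of the theorem of \cite{Habel-Plump02} there are isomorphisms $proj_i(D') \cong D_i$ and $proj_i(H') \cong H_i$ compatible with all the morphisms of the two squares; these restrict to one and the same base isomorphism --- the base of any solution being forced by the common base diagram --- so they assemble through $Prod_{i \in I}$ into $I$-labelled isomorphisms $D' \cong D$ and $H' \cong H$ compatible with the double pushout, which is the claimed uniqueness up to isomorphism.

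The step I expect to be the main obstacle is the gluing: one must be sure that the component-wise data --- the pushout complements $D_i$, and, in the uniqueness argument, the component isomorphisms --- genuinely share a single base, so that the operator $Prod_{i \in I}$ applies. This is not a new difficulty but a consequence of ingredients already in hand, namely the index-wise rule conditions of Definition~\ref{def:regle}, which keep the base diagrams coherent for all $i$, and the explicitness on bases of the pushout and pushout-complement constructions of \cite{Habel-Plump02}; the only genuinely computational residue, the label-compatibility check for $proj_i(R) +_{proj_i(K)} D_i$, is routine.
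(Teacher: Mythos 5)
Your argument is a faithful, fleshed-out version of what the paper does: the paper justifies this theorem in one sentence by saying it is obtained from the corresponding theorem of Habel--Plump via the component-wise $proj_i$/$Prod_{i\in I}$ transfer set up after the pushout lemmas, and your proof carries out exactly that reduction (dangling condition lives on the base, rule conditions project index-wise, pushout complements share a base by Lemma~\ref{samebase}, isomorphisms reassemble). The approach is the same and the details you supply are correct.
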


As our framework of $I$-labelled graphs is a direct adaptation of  partially labelled graphs as defined in \cite{Habel-Plump02}, this theorem is directly obtained by the application to  $I$-labelled graphs and $I$-labelled morphisms of the similar theorem of  \cite{Habel-Plump02} that consider  partially labelled graphs and graph morphisms. Finally, we also inherited from \cite{Habel-Plump02} that for a derivation $G \overset{r,m}{\Rightarrow} H$, $H$ is totally labelled if and only if $G$ is totally labelled where a $I$-labelled graph is said to be totally labelled when each labelling function $l_{V,i}$ is totally labelled. To sum up, graph transformations defined over ${\cal G}$ can be easily adapted for  ${\cal G}^I$ (thus for $I$-labelled graphs) by preserving all constructions and results.

\section{G-maps}
\label{sec:2-gmap}

In this section, we introduce the definition of our embedded topological structures as a particular class of  $I$-labelled graphs. First, we consider graphs without node labels  to represent the topological structure. Then, we define node labelling functions to represent the embedding. Thus, the topological structure is encoded as the base of the  $I$-labelled graph representing the embedded topological structure.

   \vspace{-0.3cm}

\subsection{The topological graph}

\begin{wrapfigure}{r}{50mm}
    \centering
            \vspace{-1.6cm}
        \includegraphics[width=45mm]{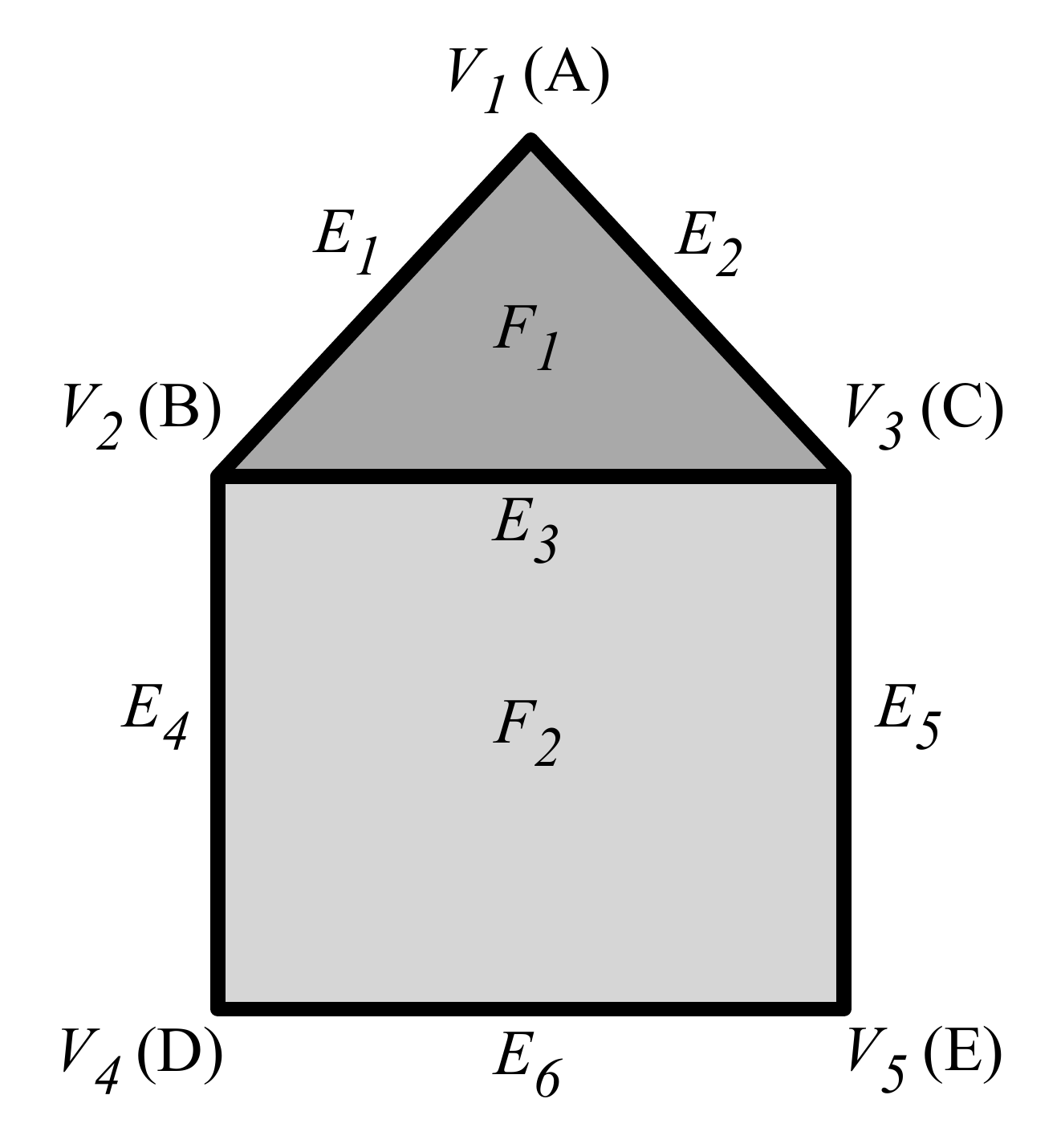}
         \vspace{-0.3cm}
    \caption{Embedded $2$D object}
    \label{fig:house_1}
\end{wrapfigure}

As said in the introduction, we choose the topological model of generalized maps (or \mbox{G-maps})~\cite{Lienhardt91}. This model is mathematically well defined. Its first main advantage is the homogeneity in the handling of dimensions: objects of any dimension can be represented in the same manner as graphs. This allows us to use rules for denoting operations defined on embedded G-maps, in an uniform way \cite{PCLAM2007_2064,PACL2008_2394}. The second advantage is that the G-map model comes with consistency constraints. They express conditions to define a topologically consistent object. Obviously, these constraints have to be maintained when operations are applied.

\begin{figure}[b]
    \centering
          \vspace{-0.9cm}
  \subfigure[]{\label{fig:house_2}
     \includegraphics[height=45mm]{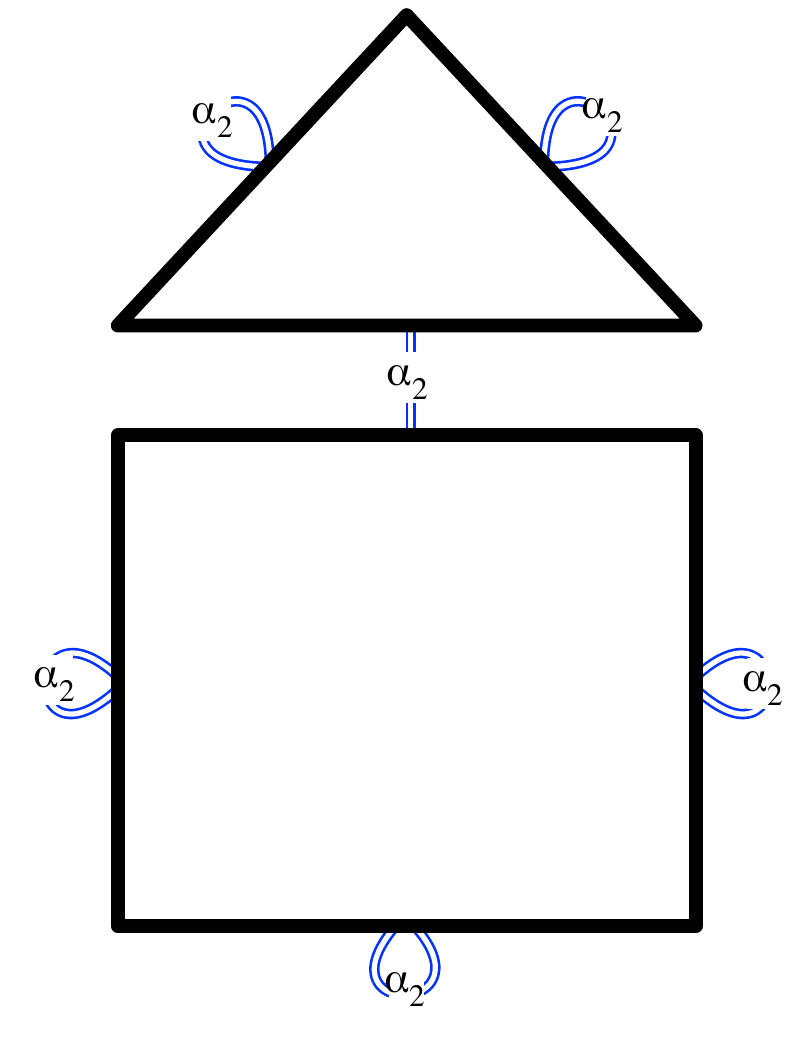}}
   \subfigure[]{\label{fig:house_3}
	\includegraphics[height=50mm]{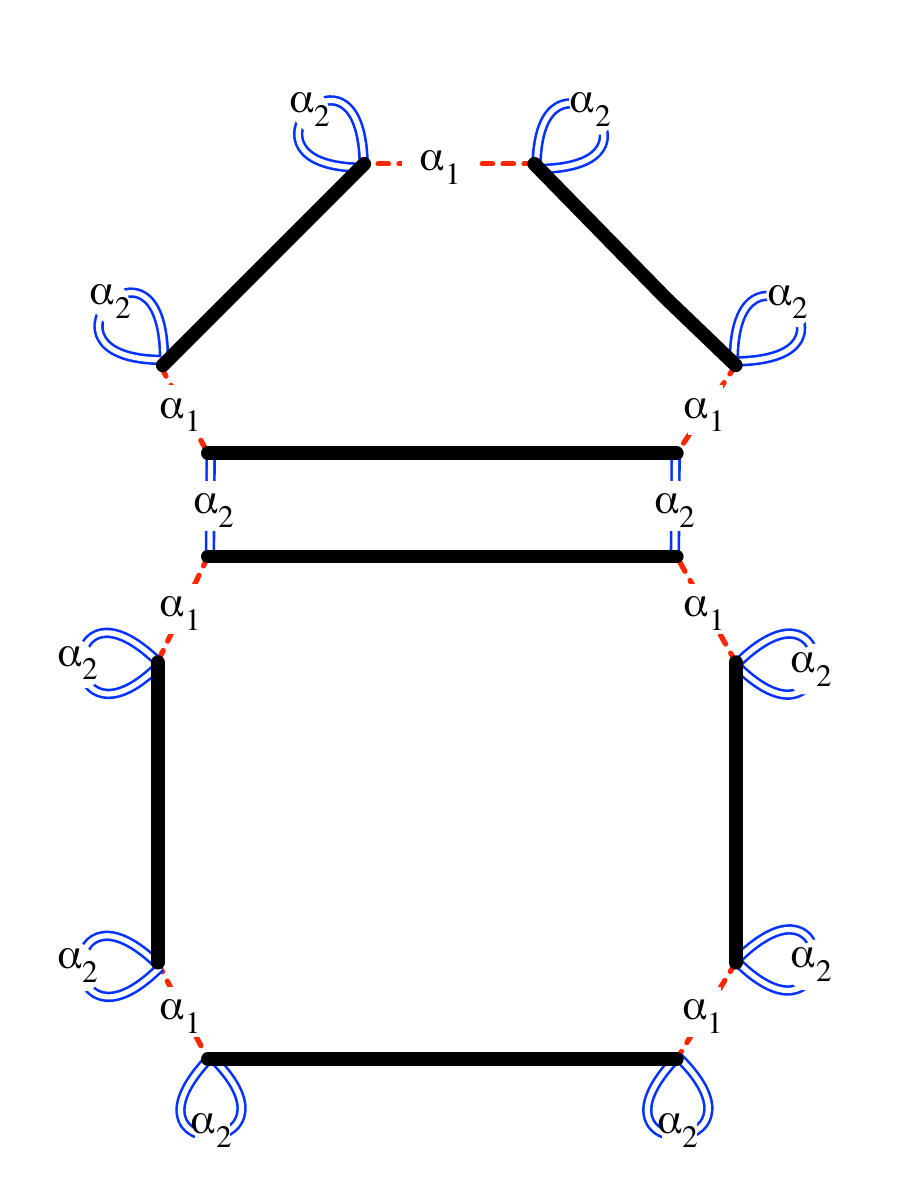}}
  \subfigure[]{\label{fig:house_4}
     \includegraphics[height=55mm]{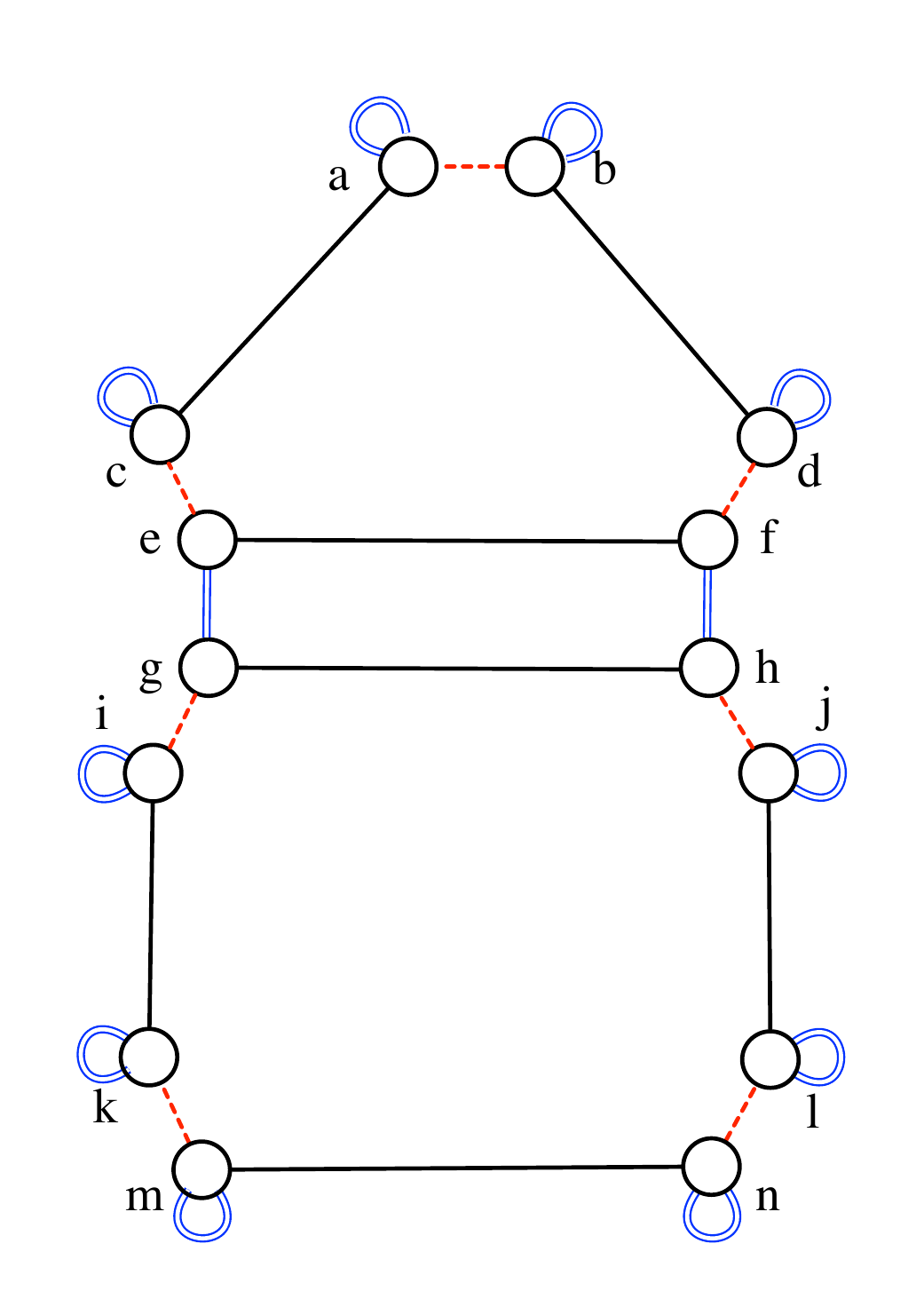}}
      \vspace{-0.3cm}
  \caption{Cell decomposition of an object}
       \label{fig:split}
        \vspace{-0.2cm}
 \end{figure}

The representation of an object as a G-map comes intuitively from its decomposition into topological cells (vertices, edges, faces, volumes, etc.). For example, the decomposition of the 2D topological object of Fig.~\ref{fig:house_1} into a $2$-dimensional G-map is shown on  Fig.~\ref{fig:split}. The object is first decomposed into faces on Fig.~\ref{fig:house_2}. These faces are \emph{linked} along their common edge $E_3$ with the relation $\alpha_2$. In the same way, faces are split into edges connected with the relation $\alpha_1$ on Fig.~\ref{fig:house_3}. At last, the edges are split into vertices by relation $\alpha_0$ to obtain the $2$-G-map of Fig.~\ref{fig:house_4}. Split vertices obtained at the end of the process are the nodes of the G-map graph and the $\alpha_i$ relations are the arcs (For a 2-dimensional G-map, $i$ belongs to $\{0,1,2\}$). Hence, for $n$ a dimension, $n$-G-maps are particular  $I$-labelled graphs where the arc label set is $\mathcal{C}_E=\{\alpha_0,\dots,\alpha_n\}$ and where arcs are totally labelled. In fact,  G-maps are represented by non-oriented graphs, that is, such that for each arc of source $v$, of target $v'$ and labelled by $\alpha_i$, there also exists an arc of source $v'$, of target $v'$ and labelled by $\alpha_i$. As usual, double reversed arcs  are represented on pictures by a non oriented arc. Notice that in all figures given in the sequel, we will use the $\alpha_i$ graphical codes of Fig.~\ref{fig:house_4} (simple line for $\alpha_0$, dashed line for $\alpha_1$ and double line for $\alpha_2$) in order to be more readable.

 \begin{wrapfigure}{r}{70mm}
    \centering
        \vspace{-0.6cm}
        \includegraphics[width=63mm]{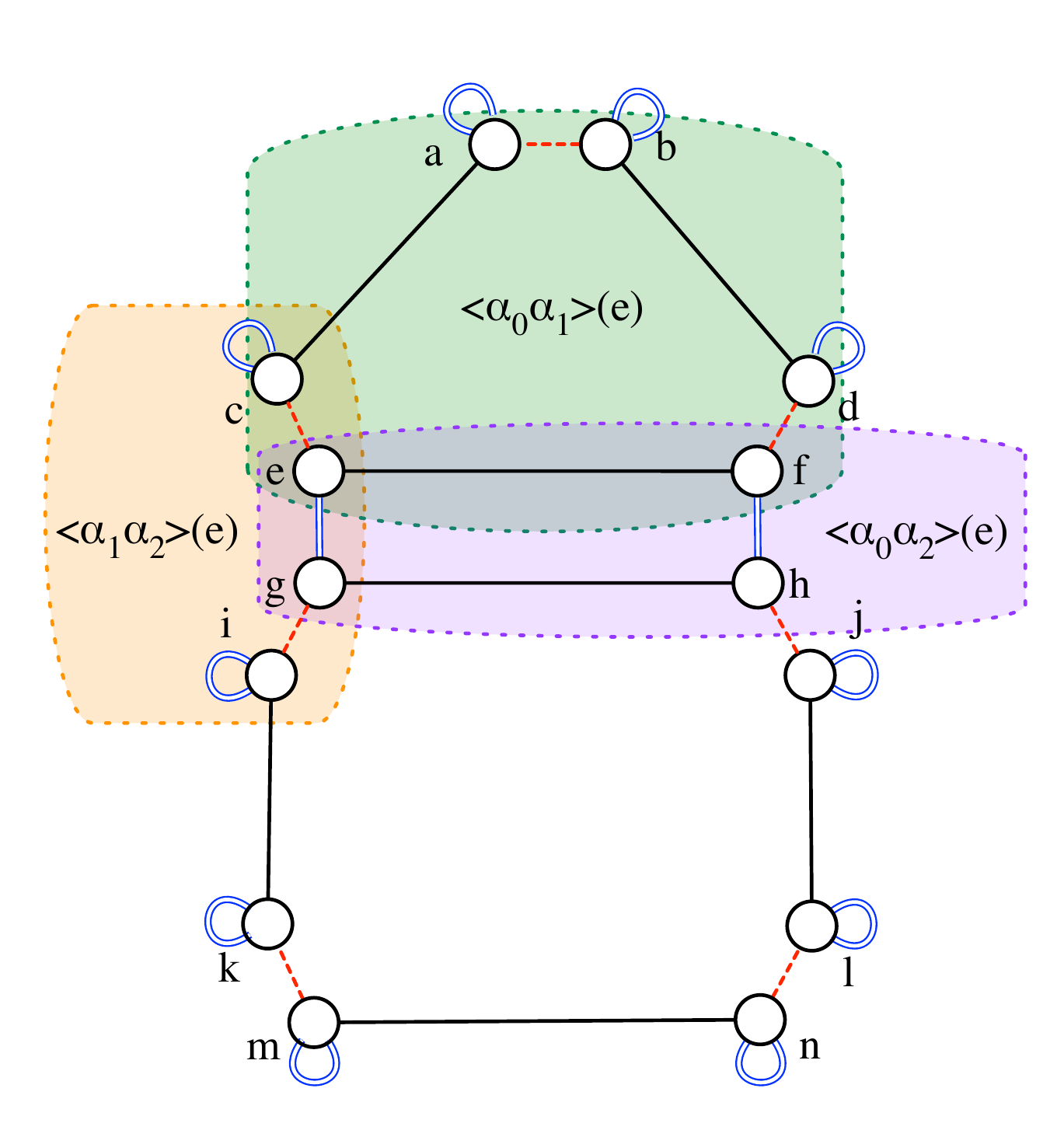}
         \vspace{-0.4cm}
    \caption{Reconstruction of adjacent cells of~$e$}
    \label{fig:cells}
\end{wrapfigure}

Topological cells are not explicitly represented in G-maps but only implicitly defined as subgraphs. They can be computed using traversal of nodes using a given set of neighborhood arcs. For example, on Fig.~\ref{fig:cells}, the $e$ incident 0-cell (or object vertex) is the subgraph which contains $e$, nodes reachable from $e$ using arcs $\alpha_1$ and $\alpha_2$ labelled (nodes $c$, $e$, $g$ and $i$) and the arcs themselves. This subgraph is denoted by ${<\!\!\alpha_1 \alpha_2\!\!>(e)}$ and models the vertex $V_2$ of Fig.~\ref{fig:house_1}. 
On Fig.~\ref{fig:cells}, the $e$ incident $1$-cell (or object edge) is the subgraph ${<\!\!\alpha_0 \alpha_2\!\!>(e)}$ containing nodes $e, f, g$ and $h$, and adjacent $\alpha_0$ and $\alpha_2$ arcs.  It represents the topological edge $E_3$. 
Finally, the $e$ incident $2$-cell (or object face) is the subgraph ${<\!\!\alpha_0 \alpha_1\!\!>(e)}$ and represents the face $F_1$.
More generally, the notion of orbit may be defined.

\begin{definition}[$n$-topological graph and orbit]
A  $I$-labelled graph $G$ is said to be an {\em $n$-topological graph}  if all arcs are  labelled in $\mathcal{C}_E=\{\alpha_0,\dots,\alpha_n\}$. 

Let us consider $o$ a subword\footnote{ $\alpha_{i_1} \dots \alpha_{i_k}$is a subword of $\alpha_0 \alpha_1 \dots \alpha_n$ if  $i_1 \dots i_k$ is a restricted increasing sequence of $[0,n]$.} of $\alpha_0 \alpha_1 \dots \alpha_n$.

Let $\equiv_{G<\!\!o\!\!>}$ be the {\em equivalence orbit relation} between $G$ nodes defined as the reflexive, symmetric and transitive closure built from arcs labelled by a label in $o$, i.e.,  ensuring that for each arc $e$ of $G$ labelled in $o$, we have $s(e) \equiv_{G<\!\!o\!\!>} t(e)$.

For any node $v$ of $G$, the $<\!\!o\!\!>$-{\em orbit}  (also simply called  {\em orbit}) of $G$ adjacent to $v$ is  denoted by ${G<\!\!o\!\!>(v)}$  and is defined as the subgraph of $G$ whose set of nodes is the equivalence class of $v$ using $\equiv_{G<\!\!o\!\!>}$, whose set of arcs are those labelled on $o$ between previous nodes, and such that source, target, labelling functions are the restrictions of the corresponding functions on sets of nodes and arcs of the equivalence class.
\end{definition}

As G-maps are mathematically well defined, they come with consistency constraints. 

\begin{definition}[Generalised map]
\label{def:topo_constr}
An $n$-dimension generalized map, or $n$-G-map, is a $n$-topological graph $G$, that satisfies the following topological constraints~:
\begin{itemize}
\item {\bf Non-orientation constraint}: $G$ is non-oriented, i.e. for each arc $e$ of $G$,  there exists a reversed arc $e'$ of $G$, such as $s_G(e')=t_G(e), t_G(e')=s_G(e)$, and $l_{G,E}(e')=l_{G,E}(e)$ ;
\item {\bf Adjacent arc constraint}: each node is the source node of exactly $n+1$ arcs respectively labelled by $\alpha_0$ to $\alpha_n$;
\item {\bf Cycle constraint}: for every $\alpha_i$ and $\alpha_j$ verifying  ${0 \leq i \leq i+2 \leq j \leq n}$, there exists a cycle\footnote{A node $v$ of a graph $G$ has an adjacent cycle labelled $l_1 \dots l_k$ if there is a path of arcs $e_1 \dots e_k$ from $v$ to $v$ such $e_1$, \dots, $e_k$ are respectively labelled by $l_1$, \dots, $l_k$.} labelled by ${\alpha_i \alpha_j \alpha_i \alpha_j}$ starting from each node.
 \end{itemize}
\end{definition}

These constraints ensure that objects represented by embedded G-maps are consistent manifolds \cite{Lienhardt94}. 
In particular, the cycle constraint ensures that in \mbox{G-maps}, two \mbox{$i$-cells}  can only be adjacent along \mbox{(${i-1}$)-cells}. For instance, in the \mbox{$2$-G-map} of Fig.~\ref{fig:house_4}, the   ${\alpha_0 \alpha_2 \alpha_0 \alpha_2}$ cycle implies that faces are stuck along topological edges. Let us notice that thanks to loops (see $\alpha_2$-loops in Fig.~\ref{fig:house_4}), these three constraints also hold at the border of objects.

 \subsection{Embedded generalized maps}

We started to define $n$-G-map as $I$-labelled graphs where the arc label set is $\mathcal{C}_E=\{\alpha_0,\dots,\alpha_n\}$.
We now complete this definition with a family of node label sets  
to represent the embedding.
Actually, as sketched in the introduction, each kind of embedding label has its own type and is defined on a particular kind of topological cell: for example, a point can be attached to a vertex, a color to a face. Thus, a node labelling function $l_{V,i}$ composing the embedding will be equipped with two static pieces of information: the kind of topological cells that is concerned by $l_{V,i}$
and the type of the data that are described by $l_{V,i}$. Based on algebraic specifications, a node labelling function is characterized by an {\em embedding operation} $\pi : <\!\!o\!\!> \rightarrow s$ where $\pi$ is its operation name, $s \in S$ is its type with $S$ a given set of data types and $<\!\!o\!\!>$ is its domain given as an $n$-dimensional orbit type.  
Hence, for a G-map, the family of node label sets  $(\mathcal{C}_{V,\pi})_{\pi \in \Pi}$ is defined by a set $\Pi$ of embedding operations. For example, for the object of Fig.~\ref{fig:house_1}, the set of embedding operations can be $\Pi=\{ point : <\!\!\alpha_1 \alpha_2\!\!> \rightarrow point\_type , color : <\!\!\alpha_0 \alpha_1\!\!> \rightarrow color\_type \}$ where $point\_type$ and $color\_type$ are supposed to be appropriate data types.  In particular, for an embedding operation $\pi : <\!\!o\!\!> \rightarrow s$,  $\mathcal{C}_{V,\pi}$ will be a set of values of type $s$, according to some algebra interpreting all the sorts involved by the embedding.

 \begin{wrapfigure}{r}{50mm}
    \centering
             \vspace{-0.8cm}
        \includegraphics[width=45mm]{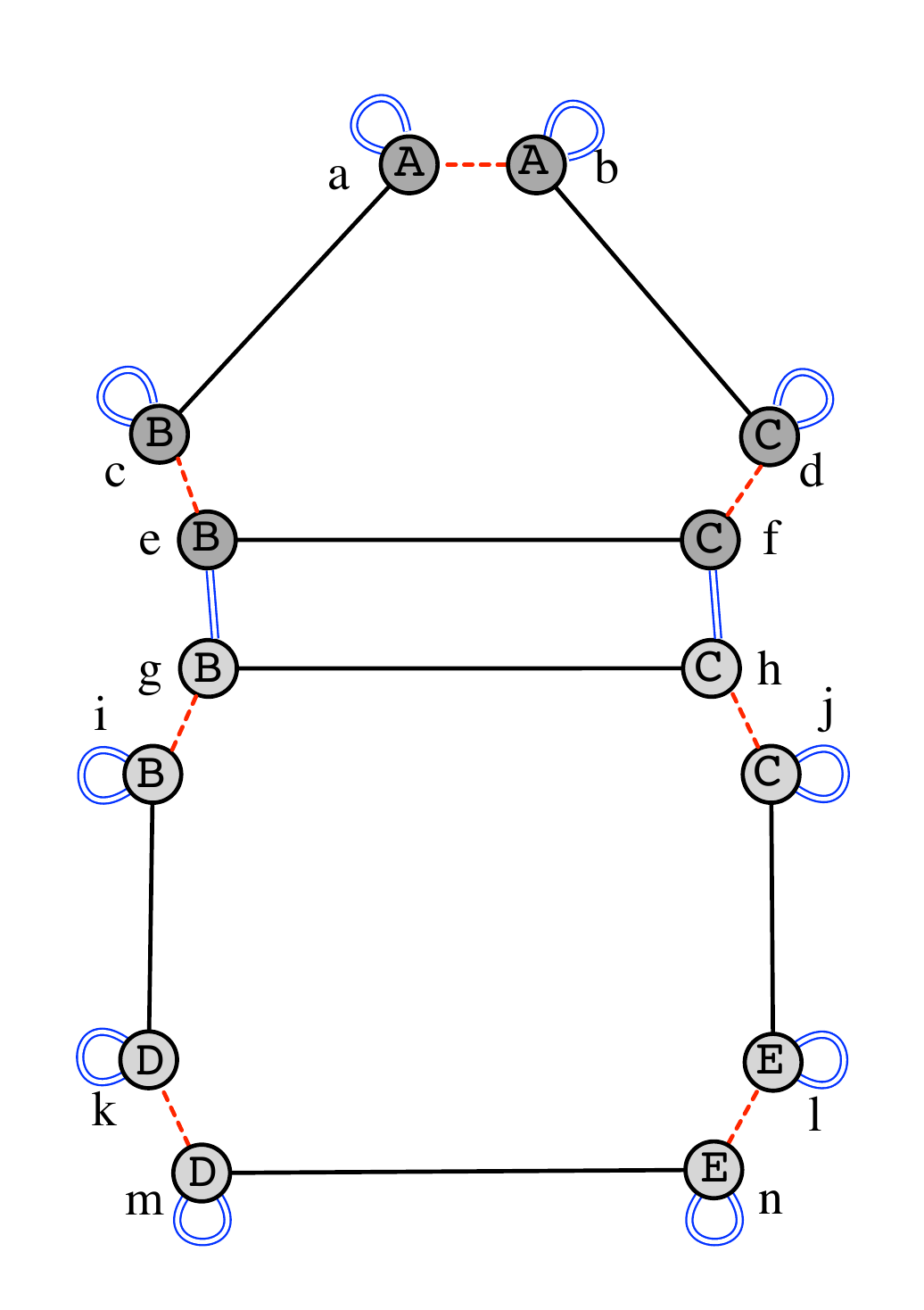}
         \vspace{-0.3cm}
    \caption{Embedded $2$-G-map}
    \label{fig:house_5}
\end{wrapfigure}

Moreover, as an embedding operation $\pi : <\!\!o\!\!> \rightarrow s$ is characterized by its domain cell, it is expected that on an embedded G-map, the $\pi$-label, also called $\pi$-embedding (that is, the image by $l_{V,\pi}$ ) is the same for every node belonging to  a common $<\!\!o\!\!>$-orbit. Hence, we represent on Fig.~\ref{fig:house_5} the embedded version of the object of Fig.~\ref{fig:house_1}. Let us notice that this graphical representation is a simplification of the full notation. For example, we only label $a$ with its point label $A$ and color it with its color label instead of the full labelling $(point:A, color:dark\_grey)$.
Hence, for the embedding operation $point$, $a$ and $b$ are labelled by $A$, $c,e,g$ and $i$ by $B$, $d,f,h$ and $j$ by $C$, $k$ and $m$ by $D$, $l$ and $n$ are labelled by $E$. For the embedding operation $color$, nodes $a$ to $f$ are labelled with dark grey and nodes $g$ to $n$ are labelled with clear grey. Thus, on Fig.~\ref{fig:house_5}, for a domain $<\!\!o\!\!>$, every node of a $<\!\!o\!\!>$-orbit has the same label.  We express this property by embedding constraints that embedded G-maps have to satisfy. 
 
 \begin{definition}[Embedded generalised map]
\label{def:ebd_constr}
Let $n$ be a dimension and $\Pi$ 
a set of embedding operations. 
An embedded $n$-dimentional generalised map on $\Pi$, or $\Pi$-embedded $n$-G-map, is an $n$-G-map $G$ which nodes are labelled by the family $(\mathcal{C}_{V,\pi})_{\pi \in \Pi}$,  that satisfies the following embedding constraint~:

 {\bf Embedding constraint}: for all embedding operations $(\pi : <\!\!o\!\!> \rightarrow s)$ of $\Pi$, 
all nodes of a given $<\!\!o\!\!>$-orbit  of $G$   are labelled with the same defined $\pi$-embedding {\em i.e.} for all nodes $v$ and $w$ of $G$, such that $v \equiv_{G<\!\!o\!\!>}w$ then $ l_{V,\pi}(w) \neq \bot$ and $l_{V,\pi}(v) = l_{V,\pi}(w)$.
\end{definition}

Clearly, $\Pi$-embedded $n$-G-maps are $\Pi$-labelled graphs. To handle and compute data associated to embedding operations, we define an algebra parameterised by a given $\Pi$-embedded $n$-G-map $G$. Let us first note $v.\pi$ the access to the $\pi$-label $l_{G,V,\pi}(v)$ of a node $v$ of $G$. For example, on the embedded G-map of Fig.~\ref{fig:house_5}, $a.point$ is $A$ and $a.color$ is dark grey. Thanks to the topological adjacent arcs constraint, we can also define link operations on G-map's nodes that from a given node, give access to neighboring nodes. So, for each node $v$ of $G$ and each arc label $\alpha_i$, $v.\alpha_i$ is the only node $v'$ of $G$ such that there exists an arc $e$ with  $s_G(e) = v$, $t_G(e)=v'$ and $l_{G,E}(e) = \alpha_i$. For example, on the embedded G-map of Fig.~\ref{fig:house_5}, $a.\alpha_1$ is the $b$ node, and $a.\alpha_0.point$ is $c.point$ {\em i.e.} B.

In the context of geometric modelling, it is common that operations collect all the $\pi$-embedding values that are carried by nodes of a given cell. For example, the triangulation of a face collects all the points associated to the face in order to compute the new point associated to the added center. Thus, we consider the collection of a given embedding operation $\pi$ carried by a given orbit $<\!\!o\!\!>(v)$. The notation $\pi\{<\!\!o\!\!>(v)\}$ will denote the multiset of $\pi$-labels of all nodes of $G<\!\!o\!\!>(v)$, that is, of  the $<\!\!o\!\!>$-orbit incident to node $v$ of $G$. For example, on the embedded G-map of Fig.~\ref{fig:house_5}, $point \{ <\!\!\alpha_0,\alpha_1,\alpha_2\!\!>(a) \}$ is the multiset $\{ A, B, C, D, E \}$ containing all points that correspond to $point$-labels of nodes of the $<\!\!\alpha_0,\alpha_1,\alpha_2\!\!>$-orbit adjacent  to the node $a$. Let us notice that our definition only keeps a point per $<\!\!\alpha_1,\alpha_2\!\!>$-cell that intersects the initial cell, here the orbit $<\!\!\alpha_0,\alpha_1,\alpha_2\!\!>(a)$. Thus,  even if the point $B$ occurs four times as $point$-embedding of nodes of $<\!\!\alpha_0,\alpha_1,\alpha_2\!\!>(a)$, that is for the nodes $c$, $e$, $g$ and $i$, there is an unique occurrence of the point $B$ in  $point \{ <\!\!\alpha_0,\alpha_1,\alpha_2\!\!>(a) \}$ since  $c$, $e$, $g$ and $i$ belong to the same $0$-cell. To summarize, for an embedding operation $\pi : <\!\!o'\!\!> \rightarrow s$,  the collect operation $\pi\{<\!\!o\!\!>(v)\}$ only keeps one $\pi$-embedding label per $<\!\!o'\!\!>$-orbit intersecting the $<\!\!o\!\!>$-orbit adjacent to $v$. Thus, the collected multiset contains a $\pi$-label twice if two different $<\!\!o'\!\!>$-orbits have the same 
$\pi$-label. In our example (cf. Fig.~\ref{fig:house_5}), each vertex has a different $point$-embedding and thus, each point appears only once in the resulting multiset.

\begin{definition}[Embedding expressions]
Let $\Pi$ be a set of embeddings for G-maps of dimension~$n$.

An embedding signature $\Sigma_\Pi = (S_\Pi, F_\Pi)$ is defined by:
\begin{itemize}
\item a set of embedding sorts $S_\Pi$ which contains at least, the predefined sort $Node$, the sort $s$ of each embedding $\pi : <\!\!o\!\!> \rightarrow s$ of $\Pi$  and the associated sort $Multi(s)$,
\item a set of embedding operations $F_\Pi$ such that each operation $f \in F_\Pi$ is equipped with its profile in $S_\Pi^* \times S_\Pi$ denoted $f : s_1 \times ... \times s_n \rightarrow s$. $F_\Pi$ contains at least:
\begin{itemize}
\item access operation $\_.\pi : Node \rightarrow s$ for each embedding $\pi : <\!\!o\!\!> \rightarrow s$ of $\Pi$, 
\item link operation $\_.\alpha_i : Node \rightarrow Node$ to any arc label $\alpha_i$, 
\item and collect operation $\pi \{ <o'> (\_) \} : Node \rightarrow Multi(s)$ for every embedding $\pi : <\!\!o\!\!> \rightarrow s$ of $\Pi$ and any orbit type $o'$ of dimension $n$.
\end{itemize}
\end{itemize}

Let ${\cal T}_\Pi(V)$ be the set of embedding terms built on $\Sigma_\Pi$ and a variable set $V$ of sort $Node$.

Let $G$ be a $\Pi$-embedded $n$-G-map. An embedding algebra ${\cal A}_{G}$ is defined by: 
\begin{itemize}
\item a set of values $A_s$ for each sort $s$ of $S_\Pi$, such that, $A_{Node}$ is the node set of $G$, and $A_{Multi(s)}$ is the multiset of $A_s$ values,

\item a function $f^{\cal A}: A_{s_1} \times ... \times A_{s_n} \rightarrow A_s$ for each operation $f: s_1 \times ... \times s_n \rightarrow s$ of $F_\Pi$, such that:
\begin{itemize}
\item 
$\_.\pi^{\cal A}$ is defined on each node $v$ of $G$ by its $\pi$-label $l_{G,V,\pi}(v)$, 
\item $\_.\alpha_i^{\cal A}$ is defined on each node $v$ of $G$ by the target $t_G(e)$ of the only arc $e$ of $G$ such $s_G(e)=v$ and $l_{G,E}(e)=\alpha_i$, 
\item and $\pi \{ <o'> (\_) \}^{\cal A}$ is defined on each node $v$ of $G$ by the multiset\footnote{Thanks to the embedding constraint verified by the embedded G-map $G$ and equivalence relationship properties, this collect interpretation is well defined.}
 $\{ l_{V,\pi}(w) ~|~w \in W/\equiv_{G<\!\!o\!\!>} \}$ where $W$ is the node set of $G<o'>(v)$ and $W/\equiv_{G<\!\!o\!\!>}$ the quotient set.
 \end{itemize}
 \end{itemize}

The interpretation $eval_\sigma(t)$ of terms $t$ of ${\cal T}_\Pi(V)$ using an assignment $\sigma$ of variables $V$ on $G$ nodes, is canonically defined with the interpretation functions of ${\cal A}_{G}$.
\end{definition}

We suppose that usual data types as $point\_type$ or $color\_type$ are provided  with usual operations as the addition operation $+$, \ldots. In the sequel, such operations are used without explicit definition. For example, the operation $mean$ computes the center of gravity of a multiset  of points (type $Multi(point)$).

   \vspace{-0.5cm}
\section{G-maps rules}
\label{sec:3-gmap_rule}

As G-maps are a particular class of $\Pi$-labelled graphs, we now investigate  how operations can be defined using graph transformation rules over ${\cal G}^I$ (see Section~1). For example, the transformation of Fig.~\ref{fig:simp_apply} adds a new vertex to the central edge of the previous object. To be consistent, rules on embedded G-maps need to preserve both the topological consistency and the embedding consistency. In this section, we will give some conditions on rules to ensure the preservation of constraints in relation with  topology  and embedding. In particular, this will allow us to state that the rule of Fig.~\ref{fig:simp_apply}  can be safely applied to any embedded $G$-map, since the resulting graph is also an embedded G-map by construction. These conditions will be extended in Section~\ref{sec:4-rule_scheme} to allow the user to use variables in order to handle rules that are generic with respect to the embedding values.

\begin{figure}[h]
    \centering
  \includegraphics[width=15cm]{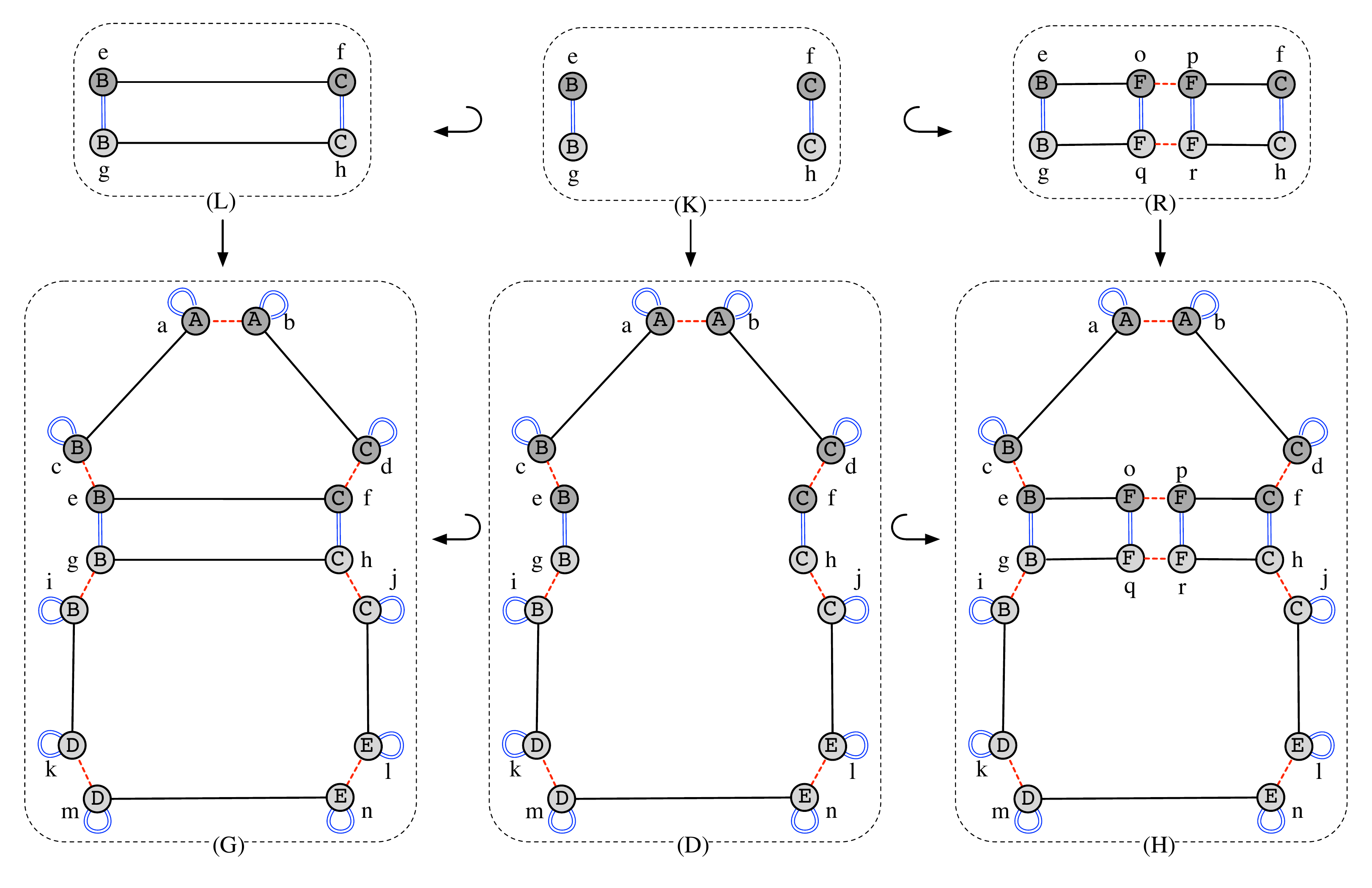}
\caption{A simple G-map transformation}
  \label{fig:simp_apply}
 \end{figure}

To ensure the topological consistency,  we have defined in~\cite{Poudret09} the following syntactic conditions on rules.
\begin{definition}[Topological consistency preservation]
\label{def:ebd_preserv}
For a rule  $r: L \hookleftarrow K \hookrightarrow R$ over ${\cal G}^{\bot}$, the conditions of topological consistency preservation are:
\begin{itemize}
\item {\it Non-orientation condition}: both $L$, $K$ and $R$ are non-oriented graphs;
\item {\it Adjacent arcs condition}: 
	\begin{itemize}
	\item adjacent arcs of preserved nodes of $K$ have the same labels on both the left-hand side and right-hand side;
	\item removed nodes of $L \backslash K$ and added nodes of $R \backslash K$ must have exactly $n+1$ adjacent arcs respectively labelled with $\alpha_0$ to $\alpha_n$;
	\end{itemize}
\item {\it Cycles condition}: 
	\begin{itemize}
	\item an added node of $R \backslash K$  must have with all ${\alpha_i \alpha_j \alpha_i \alpha_j}$-labelled cycle for ${0 \leq i \leq i+2 \leq j \leq n}$;
	\item if a preserved node of $K$ belongs to a ${\alpha_i \alpha_j \alpha_i \alpha_j}$-labelled cycle in $L$, it must belong to an ${\alpha_i \alpha_j \alpha_i \alpha_j}$-labelled cycle in $R$;
	\item if a preserved node of $K$ belongs to an incomplete ${\alpha_i \alpha_j \alpha_i \alpha_j}$-labelled cycle in $L$,  then its $\alpha_i$ and $\alpha_j$-labelled arcs are preserved in $R$.
	\end{itemize}
\end{itemize}
\end{definition}

In the following, only rules that satisfy these topological conditions are considered.  Below, we introduce syntactic conditions that ensure the embedding consistency of constructed objects.

\begin{theorem}[preservation of the embedding consistency]
\label{theo:ebd_preserv}
Let $r : L \hookleftarrow K \hookrightarrow R$ be a graph transformation rule over ${\cal G}^I$ that satisfies conditions of topological consistency preservation, $G$ a $\Pi$-embedded G-map and $m:L \rightarrow G$ a match morphism. The direct transformation $G \Rightarrow^{r,m} H$ produces an $\Pi$-embedded G-map $H$ if the following {\em conditions of embedding consistency preservation} are satisfied, for all embedding $\pi:<\!\!o\!\!> \rightarrow s \in \Pi$:
\begin{itemize}
\item All nodes of an $<\!\!o\!\!>$-orbit of $R$ are labelled with the same $\pi$-embedding, defined or not - i.e. for all nodes $v$ and $w$ of $R$ such that $v \equiv_{R<\!\!o\!\!>}w$, either $l_{R,V,\pi}(v) = l_{R,V,\pi}(w)$ with $l_{R,V,\pi}(v)  \neq \bot$, or they are both not labelled $l_{R,V,\pi}(v)  = \bot$  and $l_{R,V,\pi}(w)  = \bot$.

\item If a node $v$ of $R$ is an added node of $R \backslash K$ or a preserved node of $K$ such that its $\pi$-label is changed, then $R<\!\!o\!\!>(v)$ is a complete orbit - i.e. if $v \in V_R \backslash V_K$ or $v \in V_K$ with $l_{L,V,\pi}(v) \neq l_{R,V,\pi}(v)$, then every node of $R<\!\!o\!\!>(v)$ is the source of exactly one arc labelled by $\alpha_i$ for each label $\alpha_i$ of $o$.
\end{itemize}
\end{theorem}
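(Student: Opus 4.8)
The plan is to show that the transformed graph $H$ satisfies the embedding constraint of Definition~\ref{def:ebd_constr}, namely that for every embedding operation $\pi : <\!\!o\!\!> \rightarrow s$ of $\Pi$ and every $<\!\!o\!\!>$-orbit of $H$, all its nodes carry the same defined $\pi$-label. Since $r$ already satisfies the topological consistency preservation conditions, Theorem~\ref{theo:existence_transformation} gives us that $H$ is an $n$-G-map (so all the structural orbits are well defined); it remains only to check the embedding constraint. Fix $\pi : <\!\!o\!\!> \rightarrow s$. First I would recall from the double-pushout construction over ${\cal G}^I$ (Definition~\ref{def:transformation_directe}) how nodes and their $\pi$-labels in $H$ relate to those of $G$, $D$ and $R$: nodes of $H$ are either images of nodes of $D$ (hence, via the pushout $G \hookleftarrow D \hookrightarrow$ \dots, images of preserved nodes of $G$) or images of added nodes of $R\backslash K$; a node coming from $K$ keeps whatever label $R$ assigns it, which by rule condition~1 of Definition~\ref{def:regle} either equals its $L$-label (hence its $G$-label) or is a genuine relabelling prescribed by $R$.

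The core of the argument is a case analysis on a $<\!\!o\!\!>$-orbit $\omega$ of $H$, driven by whether $\omega$ meets the image $m'(R)$ of the comatch, where $m' : R \to H$ is the morphism on the right of the double pushout. \textbf{Case 1: $\omega$ is disjoint from the modified region}, i.e. every node of $\omega$ is an image of a node of $G$ outside $m(L)\backslash m(K)$ and the $\alpha_i$-arcs ($\alpha_i$ in $o$) connecting them are untouched by the transformation. Then $\omega$ is (isomorphic to) a $<\!\!o\!\!>$-orbit of $G$, the $\pi$-labels are unchanged, and the embedding constraint for $\omega$ follows directly from the one for $G$. \textbf{Case 2: $\omega$ meets the comatch image}, i.e. $\omega$ contains at least one node $v$ that is added ($v \in V_R\backslash V_K$ after transport along $m'$) or preserved-but-relabelled ($v \in V_K$ with $l_{L,V,\pi}(v)\neq l_{R,V,\pi}(v)$). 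Here I would invoke the second condition of embedding consistency preservation: $R<\!\!o\!\!>(v)$ is a \emph{complete} orbit, meaning every node of it has all required $\alpha_i$-arcs ($\alpha_i\in o$) inside $R$. The topological "adjacent arcs" condition forbids adding further $\alpha_i$-arcs in $o$ at these nodes when the rule is applied, so the whole $<\!\!o\!\!>$-orbit $R<\!\!o\!\!>(v)$ is carried isomorphically by $m'$ onto an $<\!\!o\!\!>$-orbit of $H$ — in fact onto $\omega$ itself, since $\omega$ cannot escape $m'(R)$ through any boundary node of the complete orbit. By the first condition of embedding consistency preservation, all nodes of $R<\!\!o\!\!>(v)$ carry the same $\pi$-label, and since $v$ is relabelled/added this label is defined (a relabelling to $\bot$ is excluded by condition~1 of Definition~\ref{def:regle} combined with the completeness forcing $v$'s label to match its defined-or-undefined neighbours consistently; more simply, $l_{R,V,\pi}(v)\neq l_{L,V,\pi}(v)$ together with $G$ being totally $\pi$-labelled on that cell forces the new value to be defined). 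Hence $\omega$ is uniformly labelled by a defined value.

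The one subtlety that I expect to be the main obstacle — and which deserves careful treatment — is \textbf{Case 3: a $<\!\!o\!\!>$-orbit $\omega$ of $H$ that straddles the boundary}, containing both untouched nodes coming from $G$ and nodes touched by the rule, but where \emph{no} touched node is added or relabelled (so the second condition does not apply). I would argue that in this situation the $\pi$-label is still preserved along the whole of $\omega$: every touched node $v$ in $\omega$ lies in $K$ with $l_{L,V,\pi}(v)=l_{R,V,\pi}(v)$, so its label equals its $G$-label; and the $\alpha_i$-connectivity within $\omega$ (for $\alpha_i\in o$) at preserved nodes is identical in $L$ and $R$ by the topological adjacent-arcs condition, so $\omega$, as a $<\!\!o\!\!>$-orbit of $H$, decomposes into pieces each of which was already a sub-orbit of a single $<\!\!o\!\!>$-orbit of $G$, glued along preserved nodes whose labels are inherited from $G$. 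Tracing the gluing shows all these pieces carried the same $\pi$-label in $G$ (they were connected in $G<\!\!o\!\!>$ through those very preserved nodes, since the adjacent-arcs condition means no $\alpha_i$-arc of $o$ at a preserved node was deleted), so $\omega$ is uniformly labelled with a defined value. Assembling the three cases establishes that $H$ satisfies the embedding constraint, and therefore, being already an $n$-G-map, is a $\Pi$-embedded $n$-G-map. \hfill $\square$\\
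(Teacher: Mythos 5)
First, note that the paper itself gives no in-text proof of Theorem~\ref{theo:ebd_preserv}: it only points to the technical report, so there is no argument here to compare yours against line by line. Your overall strategy --- reduce everything to the embedding constraint of Definition~\ref{def:ebd_constr} and perform a case analysis on the $<\!\!o\!\!>$-orbits of $H$ according to how they meet the rewritten region --- is the natural one, and your Cases~1 and~2 are essentially sound: in Case~2 the completeness of $R<\!\!o\!\!>(v)$ together with the adjacent-arc constraint satisfied by the G-map $H$ does force the corresponding orbit of $H$ to be exactly the comatch image of $R<\!\!o\!\!>(v)$, and definedness of the common label follows from condition~1 of Definition~\ref{def:regle} (an added node cannot carry an undefined $\pi$-label, and a preserved node cannot be relabelled to $\bot$).

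The gap is in Case~3, precisely where you anticipated trouble. You claim that the $\alpha_i$-connectivity ($\alpha_i \in o$) at a preserved node is ``identical in $L$ and $R$'' and hence that the pieces of $\omega$ were already connected in $G<\!\!o\!\!>$. But the adjacent-arcs condition of Definition~\ref{def:ebd_preserv} only requires the \emph{labels} of the arcs adjacent to a preserved node to coincide in $L$ and $R$; it does not fix their endpoints. A rule may therefore delete an $\alpha_i$-arc ($\alpha_i\in o$) at a preserved node and add a differently-targeted one --- this is exactly what sewing operations do, and such rules pass all the topological conditions (loops in $L$ replaced by arcs between the two sewn copies in $R$, with all cycles intact). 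Such a rewiring can merge two $<\!\!o\!\!>$-orbits of $G$ carrying different $\pi$-labels into a single $<\!\!o\!\!>$-orbit of $H$, while every node involved has an undefined $\pi$-label in both $L$ and $R$: condition~1 of the theorem is then satisfied on $R$ (all labels equal to $\bot$) and condition~2 is never triggered, since no $\pi$-label is ``changed''. Your argument does not exclude this situation, and you would need to either prove that $o$-labelled arcs at preserved, non-relabelled nodes keep their endpoints (which does not follow from the stated topological conditions), or identify an additional hypothesis forcing any preserved node whose $<\!\!o\!\!>$-orbit is topologically modified to have its $\pi$-label explicitly redefined in $R$ --- which would push such orbits into your Case~2 where the completeness condition takes over. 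As it stands, Case~3 is asserted rather than proved.
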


\begin{figure}[h]
    \centering
  \subfigure[Incomplete redefinition]{\label{fig:simp_cond_1}
  \includegraphics[width=12cm]{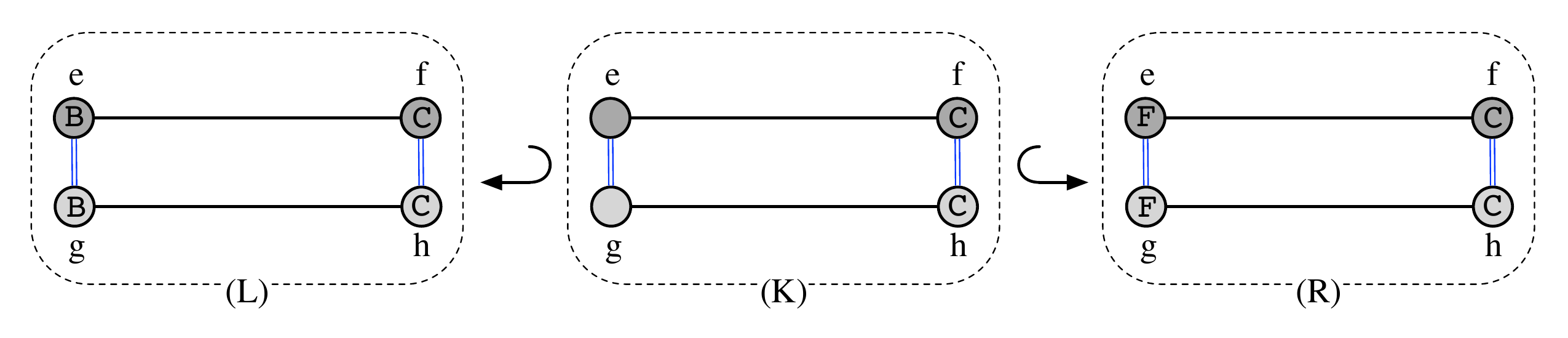}}
  \subfigure[Non-consistent added vertex]{\label{fig:simp_cond_2}
  \includegraphics[width=12cm]{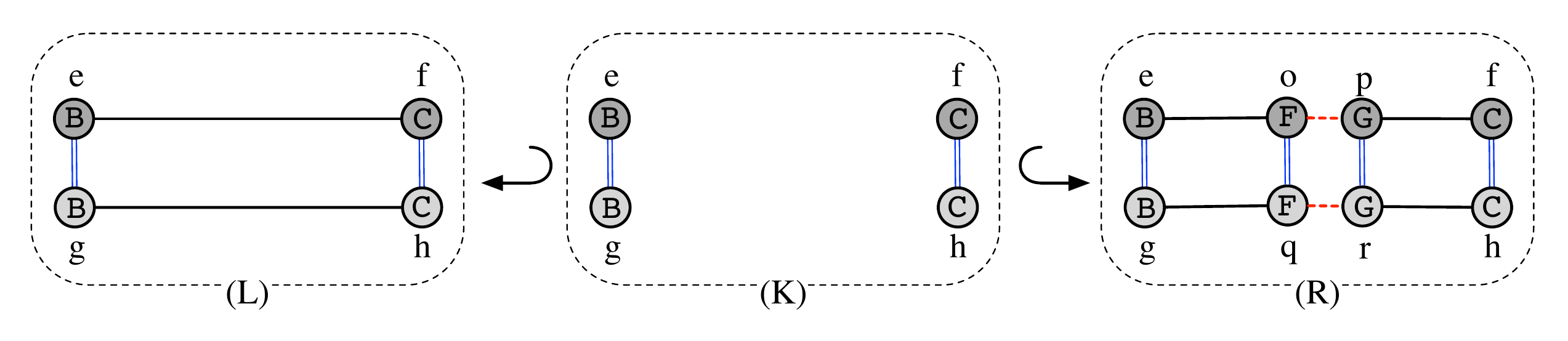}}
   \vspace{-0.3cm}
  \caption{Two non-consistent rules, not satisfying conditions of Th.~\ref{theo:ebd_preserv}.}
  \label{fig:simp_cond}
 \end{figure}

These conditions prevent the partial redefinition of an embedding. For example, the rule of Fig.~\ref{fig:simp_cond_1} tries to redefine the point $B$ by $F$. But the topological vertex (defined as a $<:\alpha_1,\alpha_2>$-orbit) is not fully matched by the rule ($\alpha_1$ is missing) and so it cannot be applied on the G-map of Fig.~\ref{fig:house_5} without breaking the embedding constraints. Indeed, if the rule was applied, node $e$ and $g$ would be labelled by point $F$ while $c$ and $i$ would still be labelled by point $B$.  In the same way, the rule of Fig.~\ref{fig:simp_cond_2} would add to the G-map a non-consistent new vertex embedded with two different points $F$ and $G$.

\begin{proof} The proof of this theorem can be found in the technical report~\cite{rapport} which contains the full length version of this paper.\end{proof}
   \vspace{-0.5cm}
\section{G-map rule schemes}
\label{sec:4-rule_scheme}

Simple rules on G-maps are quite limited. Actually, in the general context of graph transformations, rules without variables are sufficient if it is possible to write all possible transformations. In the context of geometric modeling, both the topological graph structure and the embedding node labelling are not predefined. The topological transformation depends on the original shape of the cell to transform (its number of vertices, edges, etc.). This issue has been solved by \cite{PACL2008_2394,Poudret09} with the introduction of rule schemes based on topological variables. These variables allow us to represent both the matched topological cells and their transformations. For example, a topological variable of type $<\alpha_0,\alpha_1>$  can represent any arbitrary 2-cell such that the topological triangulation operation can be applied to a triangle, a square or a pentagon.  A topological rule scheme is then instantiated according to a substitution of the given variable by a 2-cell of the G-map to be transformed. Such an instantiation builds a transformation rule that meets the conditions of topological consistency preservation (provided that the scheme rule also meets some conditions  given in \cite{PACL2008_2394,Poudret09}).
In the same way, the embedding transformation depends on the original embedding of the matched cell. For example, usually, when a face is triangulated, the central position of the added vertex depends on the positions of existing vertices. With the simple framework of Section~\ref{sec:3-gmap_rule}, there should be as many rules as possible vertex positions. We introduce embedding variables to get rule schemes that will be instantiated according to the different possible values associated to the variables.

These variables are based on the notion of {\em attributed variables} introduced by \cite{hoffmann}.
The variables label nodes of the left-hand side of rules in order to match the existing labels of the object. In the right-hand side, new labels are defined as expressions upon these variables. These algebraic expressions are then interpreted when rules are applied. For example in~Fig.~\ref{fig:attributed_var}, the variables $x$ and $y$ of the left-hand side can match any labels and the expression ${x + y}$ of the right-hand side should be evaluated according to the values provided by the match morphism in order to define the label of the new node~4. To apply this rule to an object, we instantiate the variables of the rule with the corresponding values of the matched object to obtain a classical rule that is applied as a direct transformation.

\begin{figure}[h] 
    \centering
          \vspace{-0.2cm}
     \includegraphics[height=22mm]{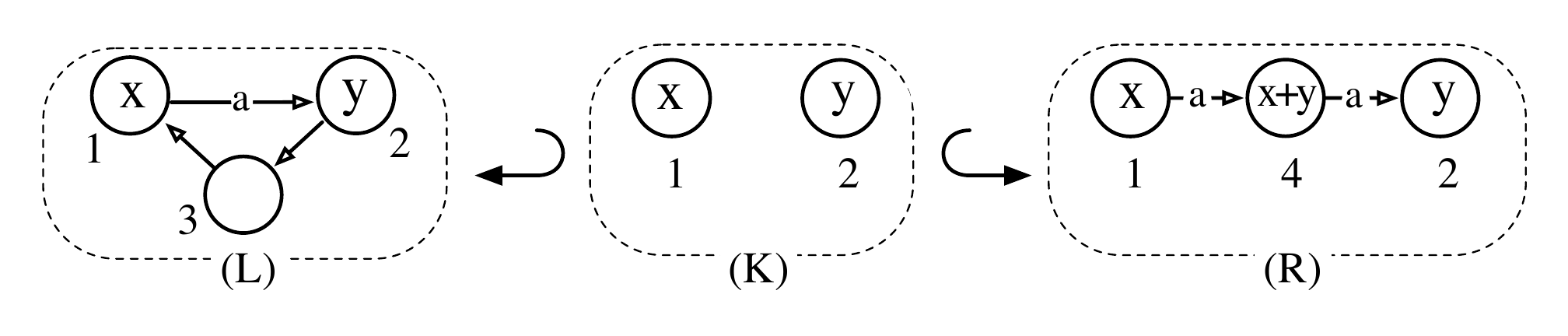}
           \vspace{-0.2cm}
   \caption{A rule with attributed variables}
   \label{fig:attributed_var}
 \end{figure}
 
As in our case, nodes have multiple labels, rules can have a variable per node and per embedding operation to define transformations. To simplify computations on embedding values, we use embedding expressions introduced in Section~\ref{sec:2-gmap}. For example, on Fig.~\ref{fig:simp_nota_expr_1}, the rule translates by a vector $\vec{P}$ the points associated to the nodes $a$ and $b$. The color associated to node $b$ is redefined while the color associated to $a$ is not matched by the rule and, as a consequence, not transformed. On Fig.~\ref{fig:simp_nota_expr_2} we use a simplified notation. As there is no ambiguity on the type of the expressions, they are not explicitly typed. In the same way, the unmatched color of $a$ is not represented. Moreover, for lack of space, the expressions will often be placed below the graph and referenced by a number. For example,  the node $a$ is labelled by the number 1 that represents the expression $a.point + \overrightarrow{P}$ associated to $(1)$.

\begin{figure}[t]
    \centering
              \vspace{-0.4cm}
  \subfigure[Full notation]{\label{fig:simp_nota_expr_1}
  \includegraphics[width=12cm]{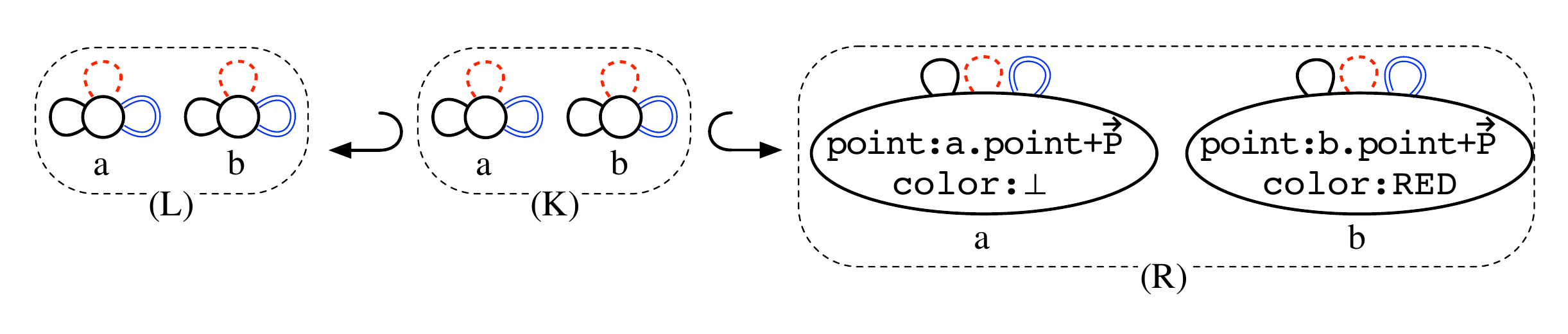}}
  \subfigure[Simplified notation]{\label{fig:simp_nota_expr_2}
         \vspace{-0.2cm}
  \includegraphics[width=10cm]{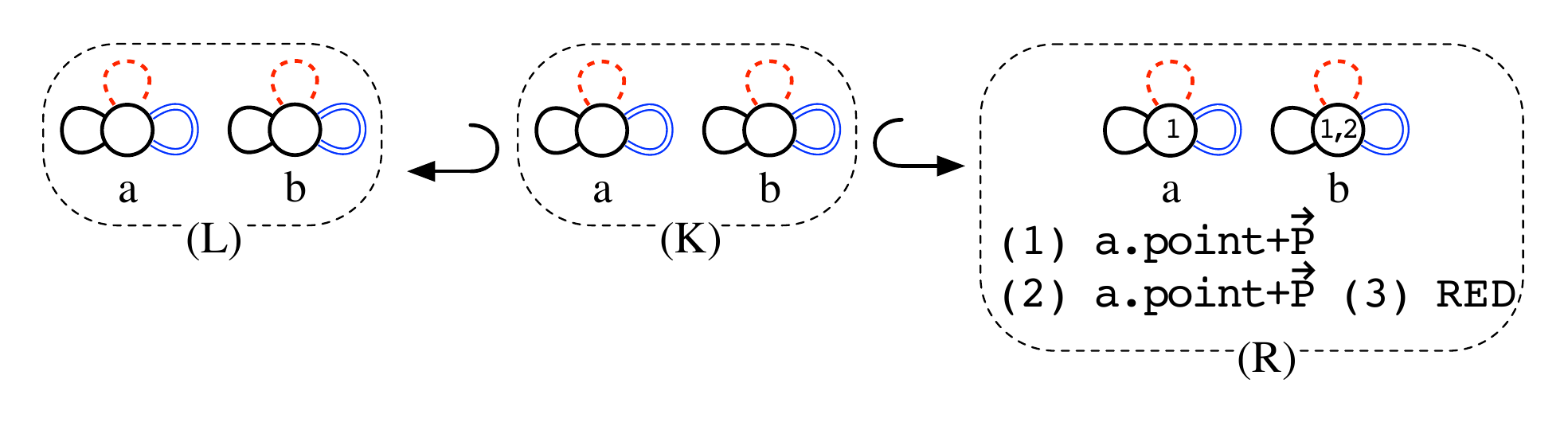}}
  \caption{Translation of an isolated vertex}
         \vspace{-0.4cm}
  \label{fig:simp_nota_expr}
 \end{figure}

Let us notice that in the example of Fig.~\ref{fig:simp_nota_expr}, this notation allows us to not explicitly label both the left-hand side  and the kernel of the rule in order to match the embedding. Expressions on variable names allow us to directly compute new labels in the right-hand side. For example, on Fig.~\ref{fig:split_scheme}, when the edge is split, the center is computed with the expression $(e.point+f.point)/2$ while the preserved nodes keep their original embedding. In order to apply the rule of Fig.~\ref{fig:split_scheme} to object of Fig.~\ref{fig:house_5} along the inclusion match morphism, the variables have to be instantiated and expressions computed. 
For example on Fig.~\ref{fig:split_scheme_inst}, $e.color$ and $g.color$ are respectively instantiated by dark grey and light grey and the new point is computed as $(B+C)/2$. However,  even with such evaluation and computation mechanisms, the rule cannot be  directly applied. The instantiation mechanism has also to complete the orbits of redefined embedding values. Indeed, rule schemes describe the modification in a  minimal way. In particular, for an embedding operation $\pi :  <\!\!o\!\!> \rightarrow s$, we have to deal with indirect modifications for nodes belonging to an $<\!\!o\!\!>$-orbit of a node whose $\pi$-embedding is modified by the rule.  For example, as the $color$-embedding labels are redefined for the node $e$, $f$, $g$ and $h$ (in the present case they remain the same), then, potentially, the $color$-embedding of all nodes that belong to an $<\alpha_0,\alpha_1>$-orbit of one of these nodes can be modified by the transformation rule application.  For this reason, for a given match morphism,  the instantiation mechanism will both substitute the embedding variables and complete the pattern under modification to include all possible indirect modifications (in Fig.~\ref{fig:split_scheme_inst},  the completion mechanism will consider the full triangle and the full square in order to redefine colors).
The application of the instantiated rule to the object is then the classical rule application (as described in Section~\ref{sec:3-gmap_rule}). 

\begin{figure}[t]
    \centering
          \vspace{-0.3cm}
  \subfigure[Rule scheme]{\label{fig:split_scheme}
  \includegraphics[width=12cm]{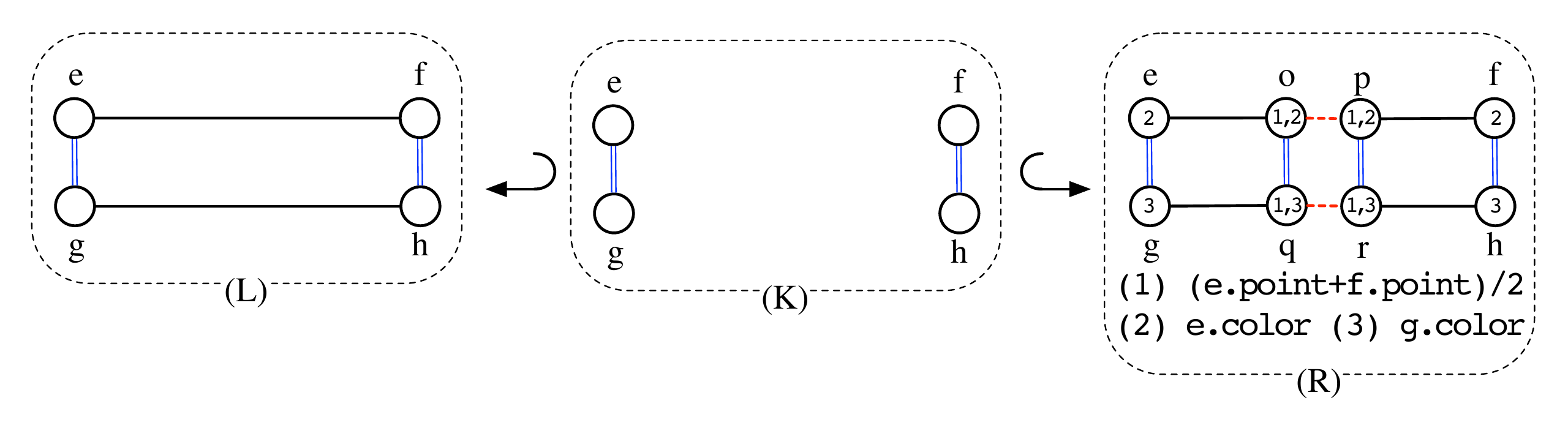}}
  \subfigure[Instantiated rule]{\label{fig:split_scheme_inst}
  \includegraphics[width=15cm]{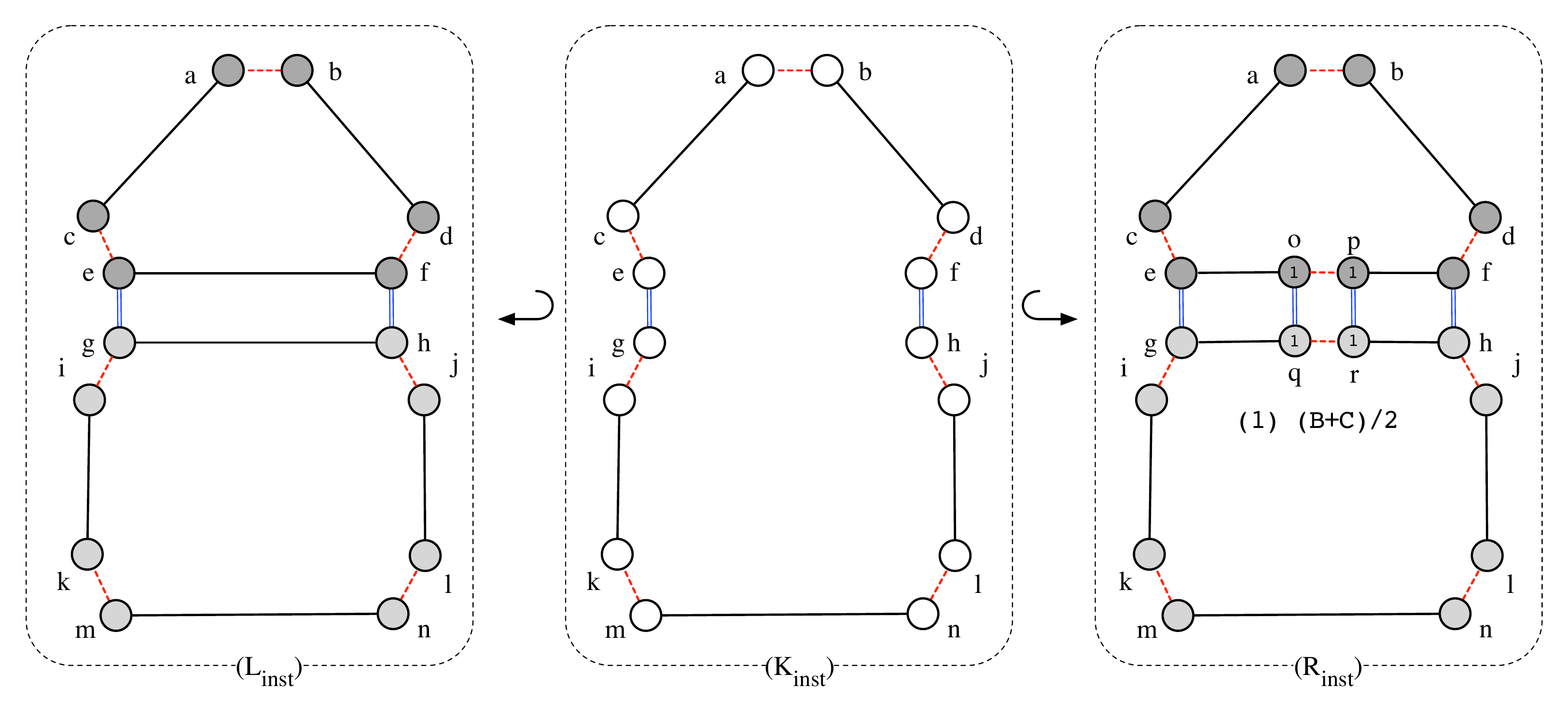}}
  \caption{Edge splitting scheme}
       \vspace{-0.3cm}
 \end{figure}
 
The rule schemes allow us to compute new embedding values by using expressions introduced in Section~\ref{sec:2-gmap}.
For example, the rule scheme of Fig.~\ref{fig:triang_scheme_1} defines the triangulation of a triangle. A vertex is added at the center of the face, and its associated point is defined by the expression  $mean(point\{<\!\!\alpha_0 \alpha_1\!\!>(a)\})$ as the mean of the points of the face. This expression is interpreted by $mean\{A,B,C\}$ when rule is instantiated on Fig.~\ref{fig:triang_scheme_inst} to be applied on object Fig.~\ref{fig:house_1}. Simultaneously, the colors of faces created by triangulation are defined as the mean between the original face color and the color of their respective adjacent faces. For example, the left/up side face color is defined as $(a.color + a.\alpha_2.color)/2$ where the expression $a.\alpha_2$ represents a node of the adjacent face (or the node itself if there is no adjacent face).
 When this rule is instantiated on Fig.~\ref{fig:triang_scheme_inst}, $a.\alpha_2.color$ is instantiated by the color of $a$, $b.\alpha_2.color$ by the color of $b$ and $e.\alpha_2.color$ by the color of $g$.  Let us notice that for this instantiation, the face is fully matched by the rule scheme and so the face orbit does not have to be completed to define the color properly.  At the opposite, the vertex orbits corresponding to the embedded points $B$ and $C$ are not fully matched but they have to be completed with $g$, $h$, $i$ and $j$ by the instantiation mechanism since $point$-embeddings are redefined for the nodes $e$ and $f$. 

\begin{figure}[t]
    \centering
              \vspace{-0.3cm}
  \subfigure[Rule scheme]{\label{fig:triang_scheme_1}
  \includegraphics[width=15cm]{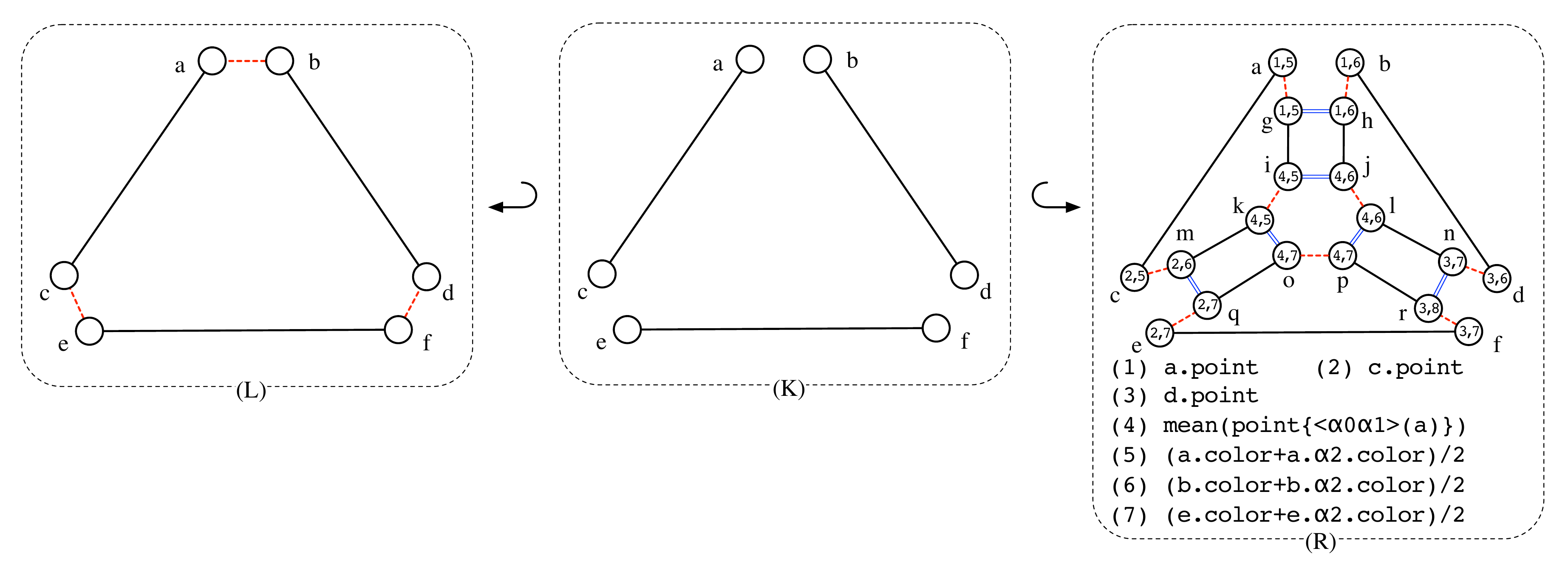} }
  \subfigure[Instantiated rule]{\label{fig:triang_scheme_inst}
  \includegraphics[width=15cm]{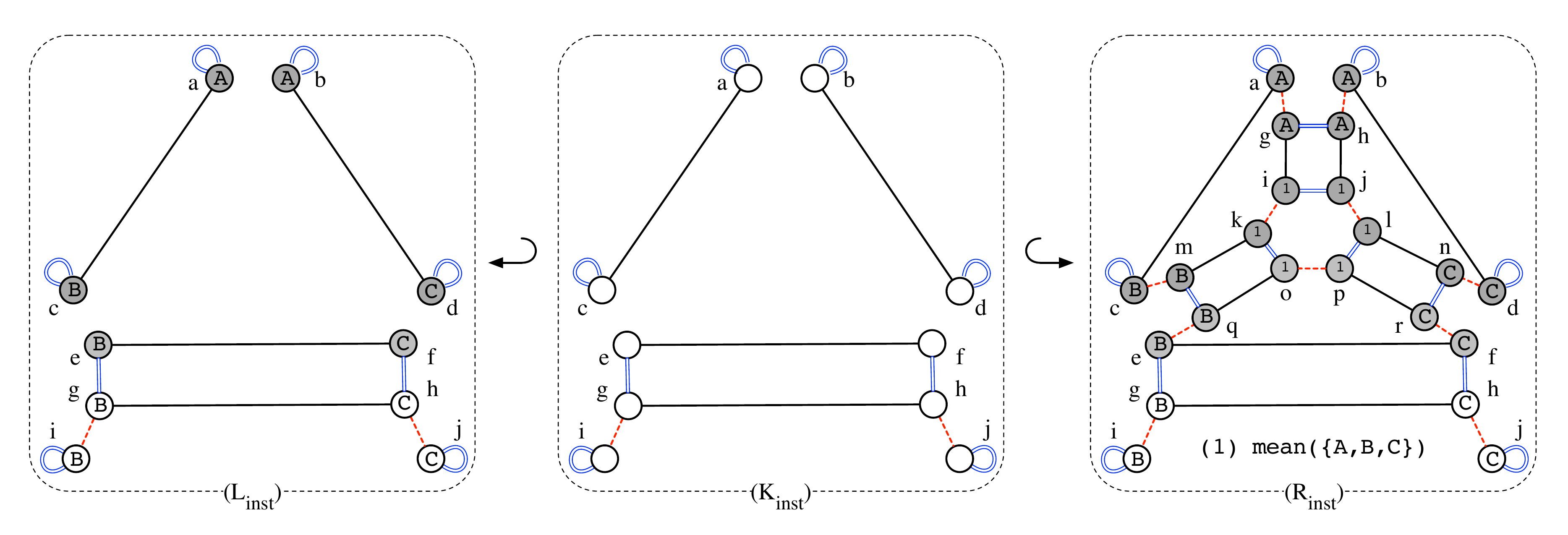}}
  \caption{Triangulation scheme}
      \vspace{-0.3cm}
    \label{fig:triang_scheme}
 \end{figure}

\begin{definition}[Graph scheme]
Let $G$ be a $\Pi$-embedded $n$-G-map. Let us consider an embedding signature $\Sigma_\Pi$ and its corresponding embedding algebra ${\cal A}_{G}$.

A graph scheme $H$ on ${\cal T}_\Pi(V)$ is a $\Pi$-labelled graph on terms of ${\cal T}_\Pi(V)$.

Let $\sigma:V \rightarrow V_G$ be an interpretation of the variables, the evaluation $eval_\sigma(H)$ of the graph $H$ is the $\Pi$-labelled graph that has the same base ($eval_\sigma(H)_\bot=H_\bot$) such as for each embedding operation $\pi$ of $\Pi$, $l_{eval_\sigma(H),V,\pi} = l_{H,V,\pi} \circ eval_\sigma$.
\end{definition}

\begin{definition}[Rule scheme]
Let $\Pi$ be a set of embedding operations of dimension $n$ and $\Sigma_\Pi$ an embedding signature.

A rule scheme $r_{\cal T} : L_{\cal T} \hookleftarrow K_{\cal T} \hookrightarrow R_{\cal T}$ on $\Sigma_\Pi$ is defined by two inclusion morphisms $K_{\cal T} \hookrightarrow L_{\cal T}$ and $K_{\cal T} \hookrightarrow R_{\cal T}$ between the graph schemes $ L_{\cal T}$, $ K_{\cal T}$ and $ R_{\cal T}$ on ${\cal T}_\Pi(V_L)$ such that:
\begin{itemize}
\item node labels of $L_{\cal T}$ (and so, labels of $K_{\cal T}$) are undefined 
- {\em i.e.} $L_{\cal T} = \Pi_{\pi \in \Pi} (proj_\pi(L_{\cal T})_\bot)$;
\item 
$R_{\cal T}$ satisfies the embedding constraints of Definition~\ref{def:ebd_constr}.
\end{itemize}
\end{definition}

The instantiation mechanism of a rule scheme is  constructive and based on the match morphism $m : L_{\cal T} \rightarrow G$ between the left-hand side of the scheme rule $L_{\cal T}$ and the embedded G-map $G$ on which the rule schema is applied. The main underlying idea is basically to build from the considered pattern ($ L_{\cal T}$, $K_{\cal T}$ or $R_{\cal T}$) and from the match morphism $m$, a graph completed with all nodes (and arcs) belonging to orbits whose embedding values can potentially be modified by the application of the rule. The resulting graphs are respectively denoted as $L[m]$, $K[m]$ and $R[m]$.
\begin{itemize}
\item the left hand-side $L[m]$ of the instantiated rule will consist of all matched nodes  together with nodes whose embedding values can be indirectly modified and of all associated embedding values. 
\item similarly, the kernel $K[m]$ will be built following the same construction, but without node labels.
\item the right hand-side $R[m]$ will include $K[m]$ and be completed with added parts and labels of $R_{\cal T}$ that are evaluated.
\end{itemize}

\begin{definition}[Rule scheme instantiation]
Let $\Pi$ be a set of embedding of dimension $n$ and $\Sigma_\Pi$ an embedding signature.
Let $r_{\cal T} : L_{\cal T} \hookleftarrow K_{\cal T} \hookrightarrow R_{\cal T}$ be a rule scheme on  $\Sigma_\Pi$, $m : L_{\cal T} \rightarrow G$ be a match morphism on a $\Pi$-embedded $n$-G-map $G$, and ${\cal A}_{G}$ be a $\Sigma_\Pi$-embedding algebra.

The instantiated rule $r[m] : L[m] \hookleftarrow K[m] \hookrightarrow R[m]$ is defined by  

\begin{itemize}
\item $L[m]=Lsat_{\Pi \times V_{K_{\cal T}}}(L_{\cal T})$, 
\item $K[m]=Ksat_{\Pi \times V_{K_{\cal T}}}(K_{\cal T})$, 
\item and $R[m]=Rsat_{\Pi \times V_{K_{\cal T}}}(R_{\cal T})$,
\end{itemize}
where the saturation operators  $Lsat$, $Ksat$ and $Rsat$ are recursively defined on $\Pi \times V_{K_{\cal T}}$.

Let us define the saturation operators   $Lsat$, $Ksat$ and $Rsat$  by the following induction principle over the elements of the  set  $\Pi \times V_{K_{\cal T}}$ :

\begin{list}{\labelitemi}{\leftmargin=0.5cm}

\item{\bf base case} $\Pi \times V_{K_{\cal T}} = \emptyset$.

Let $\sigma_m:V_{L_{\cal T}} \rightarrow V_G$ be the substitution that associates to each node $v$ of $L_{\cal T}$ its image $m(v)$ along the match morphism $m$. 

$Lsat_{\emptyset}(L_{\cal T})$, $Ksat_{\emptyset}(K_{\cal T})$ and $Rsat_{\emptyset}(R_{\cal T})$ are the graphs respectively isomorphic to $m(L_{\cal T})$  (the node images with all their embedding values and arcs issued from $L_{\cal T}$), $Prod_{\pi \in \Pi} (proj_\pi(m(K_{\cal T}))_\bot)$ and $eval_{\sigma_m}(R_{\cal T})$ such that the following inclusions exist:
$
Lsat_{\emptyset}(L_{\cal T}) \hookleftarrow Ksat_{\emptyset}(K_{\cal T}) \hookrightarrow Rsat_{\emptyset}(R_{\cal T})
$.

Let $h_{Lsat_{\emptyset}}:L_{\cal T} \rightarrow Lsat_{\emptyset}(L_{\cal T})$ be the morphism that associates each node $v$ of $L_{\cal T}$ to the node of $Lsat_{\emptyset}$ isomorphic to $m(v)$.

Let $g_{Lsat_{\emptyset}}:Lsat_{\emptyset}(L_{\cal T}) \rightarrow G$ that associates each node of $Lsat_{\emptyset}(L_{\cal T})$ that is isomorphic to $m(v)$ to $m(v)$ itself.
In particular, for all node $v$ of $L_{\cal T}$, $g_{Lsat_{\emptyset}}(h_{Lsat_{\emptyset}}(v))=m(v)$.

\item{\bf induction step} $\Pi \times V_{K_{\cal T}} \neq \emptyset$

Let note a subset $PV \subset \Pi \times V_{K_{\cal T}}$, a $\Pi$-labelled rule $Lsat_{PV}(L_{\cal T}) \hookleftarrow Ksat_{PV}(K_{\cal T}) \hookrightarrow Rsat_{PV}(R_{\cal T})$, and two morphisms $h_{Lsat_{PV}}:L_{\cal T} \rightarrow Lsat_{PV}(L_{\cal T})$ and $g_{Lsat_{PV}}:Lsat_{PV}(L_{\cal T}) \rightarrow G$.

Let $\pi:<\!\!o\!\!> \rightarrow s \in \Pi$ and $v \in V_{K_{\cal T}}$ with $(\pi,v) \not\in PV$.

Let us construct $Lsat_{\{(\pi,v)\} \cup  PV}(L_{\cal T})$ with the appropriate morphisms.
   
\begin{center}~
\xymatrix{
    ~ & \ar@{->}[ld]_b  Lsat_{PV}(L_{\cal T})<\!\!o\!\!>(h_{Lsat_{PV}}(v)) \ar@{^{(}->}[rd]^a & ~ & 
	    \ar@{->}[ld]_{h_{Lsat_{PV}}} L_{\cal T} \\
(*)
	\ar@{->}[rd]^d & (1) & \ar@{->}[ld]_c
	 Lsat_{PV}(L_{\cal T}) \ar@{->}[rd]^{g_{Lsat_{PV}}}  & ~ & ~\\
	~ & Lsat_{\{(\pi,v)\}\cup PV}(L_{\cal T}) 
		 \ar@{->}[rr]_{g_{Lsat_{\{(\pi,v)\}\cup PV}}}^*+{~~~~~~~~~~~~(2)} 
		 & ~ & G & ~ \\
}\\
	$\mbox{where} (*) proj_\pi(\mbox{G<\!\!o\!\!>}(m(v))) \times Prod_{\pi' \in \pi \backslash \pi} proj_{\pi'}(\mbox{G<\!\!o\!\!>}(m(v)))_\bot $
    \end{center}

Let us define the morphisms
\begin{itemize}
\item $a: Lsat_{PV}(L_{\cal T})<\!\!o\!\!> (h_{Lsat_{PV}}(v)) \hookrightarrow Lsat_{PV}(L_{\cal T})$ 
\item 
and  $b: Lsat_{PV}(L_{\cal T})<\!\!o\!\!> (h_{Lsat_{PV}}(v)) \rightarrow proj_\pi(\mbox{G<\!\!o\!\!>}(m(v))) \times Prod_{\pi' \in \Pi \backslash \pi} proj_{\pi'}(\mbox{G<\!\!o\!\!>}(m(v)))_\bot$ 
\end{itemize}
such that for all node or arc $x$ of $Lsat_{PV}(L_{\cal T})<\!\!o\!\!> (h_{Lsat_{PV}}(v))$, $b(x)=g_{Lsat_{PV}}(x)$.

Let us define $Lsat_{\{(\pi,v)\}\cup PV}(L_{\cal T})$ as the pushout (1) of $a$ and $b$ defined by 
\begin{itemize}
\item $c: Lsat_{PV}(L_{\cal T}) \rightarrow Lsat_{\{(\pi,v)\}\cup PV}(L_{\cal T})$ 
\item and $d: proj_\pi(\mbox{G<\!\!o\!\!>}(m(v))) \times Prod_{\pi' \in \Pi \backslash \pi} proj_{\pi'}(\mbox{G<\!\!o\!\!>}(m(v)))_\bot \rightarrow Lsat_{\{(\pi,v)\}\cup PV}(L_{\cal T})$.
\end{itemize}

Let $h_{Lsat_{\{(\pi,v)\}\cup PV}} = c \circ h_{Lsat_{PV}}$.

And let $g_{Lsat_{\{(\pi,v)\}\cup PV}}: Lsat_{\{(\pi,v)\}\cup PV} \rightarrow  G$ be the morphism such as diagram (2) is commutative and $g_{Lsat_{\{(\pi,v)\}\cup PV}} \circ d$ be the identity for all node or arc $x$ of $proj_\pi(\mbox{G<\!\!o\!\!>}(m(v))) \times Prod_{\pi' \in \Pi \backslash \pi} proj_{\pi'}(\mbox{G<\!\!o\!\!>}(m(v)))_\bot$ (that is always possible because (1) and (2) are commutative).

In particular, for all node $v$ of $L_{\cal T}$, $g_{Lsat_{\{(\pi,v)\}\cup PV}}(h_{Lsat_{\{(\pi,v)\}\cup PV}}(v))=m(v)$ because (2) commutes and the induction hypothesis on $Lsat_{PV}$ and its associated morphism.

The construction of $Ksat_{\Pi,V_{K_{\cal T}}}(K_{\cal T})$ and $Rsat_{\Pi,V_{K_{\cal T}}}(R_{\cal T})$  is similar with a difference in the labelling along $b$ and $d$.
For the kernel, as we want no label, we use $Prod_{\pi \in \Pi} proj_{\pi}(\mbox{G<\!\!o\!\!>}(m(v)))_\bot$ instead of $proj_\pi(\mbox{G<\!\!o\!\!>}(m(v))) \times Prod_{\pi' \in \Pi \backslash \pi} proj_{\pi'}(\mbox{G<\!\!o\!\!>}(m(v)))_\bot$. In the same way, for the right hand-side, as we want expression interpretations as node labels, we use $(proj_\pi(\mbox{G<\!\!o\!\!>}(m(v))))_{eval_{\sigma m}(l_{R_{\cal T},V,\pi}(v))} \times Prod_{\pi' \in \Pi \backslash \pi} proj_{\pi'}(\mbox{G<\!\!o\!\!>}(m(v)))_\bot$ instead of $proj_\pi(\mbox{G<\!\!o\!\!>}(m(v))) \times Prod_{\pi' \in \Pi \backslash \pi} proj_{\pi'}(\mbox{G<\!\!o\!\!>}(m(v)))_\bot$. 

The following inclusions hold:
$
Lsat_{\{(\pi,v)\}\cup PV}(L_{\cal T}) \hookleftarrow Ksat_{\{(\pi,v)\}\cup PV}(K_{\cal T}) \hookrightarrow Rsat_{\{(\pi,v)\}\cup PV}(R_{\cal T})
$.

Finally, the result match morphism $m^*:L[m] \rightarrow G$ is $m^*=g_{Lsat_{\Pi \times V_{K_{\cal T}}}}$.

\end{list}
\end{definition}

Let us note that the inclusion morphism $a$ always exists, by definition of orbits. For the left-hand and kernel parts, it is clear that $b$ exists, since all added graphs during  saturation are included in $G$. For the right-hand part, existence of $b$ depends on the condition imposed on $R_{\cal T}$ by the rule scheme definition. Thus, the saturation with $(\pi,v)$ and $(\pi,w)$ for two nodes that belong to the same orbit ({\em ie} $v \equiv_{R_{\cal T}<\!\!o\!\!>}w$ where $<\!\!o\!\!>$ is the domain of $\pi$) adds the same graph. Especially, $l_{R_{\cal T}, V, \pi}(v) = l_{R_{\cal T}, V, \pi}(w)$, and thus \\
$proj_\pi(\mbox{G<\!\!o\!\!>}(m(v)))_{eval_{\sigma m}(l_{R_{\cal T},V,\pi}(v))} = proj_\pi(\mbox{G<\!\!o\!\!>}(m(w)))_{eval_{\sigma m}(l_{R_{\cal T},V,\pi}(w))}$.

At each saturation step, graphs added to the left-hand side, to the kernel, and to the right-hand side have the same base. Thus, the double inclusion always exists with an adequate choice of node names and arc names.

The saturation order of $(\pi,v)$ couples does not matter, because the construction of the morphism $b$ guarantees an unique addition of nodes and arcs of G (or their isomorphisms) to the instantiated rule.

\begin{theorem}[preservation of embedded G-map's consistency]
\label{theo:ebdGmap_preserv}
Let $\Pi$ be a set of embedding of dimension $n$, 
$r_{\cal T} : L_{\cal T} \hookleftarrow K_{\cal T} \hookrightarrow R_{\cal T}$ be a rule scheme and  $m : L_{\cal T} \rightarrow G$ be a match morphism on a $\Pi$-embedded $n$-G-map $G$. If $r_{\cal T}$ satisfies the conditions of topological consistency preservation for $m$, the direct transformation $G \Rightarrow^{r[m],m^*} H$ with the instantiated rule $r[m]$ exists and produces a $\Pi$-embedded $n$-G-map $H$.
\end{theorem}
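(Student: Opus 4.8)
The plan is to reduce Theorem~\ref{theo:ebdGmap_preserv} to the already-established Theorem~\ref{theo:ebd_preserv} by showing that the instantiated rule $r[m] : L[m] \hookleftarrow K[m] \hookrightarrow R[m]$, together with the match morphism $m^* = g_{Lsat_{\Pi \times V_{K_{\cal T}}}}$, meets both the topological consistency preservation conditions (Definition~\ref{def:ebd_preserv}) and the embedding consistency preservation conditions of Theorem~\ref{theo:ebd_preserv}. Once this is done, Theorem~\ref{theo:ebd_preserv} immediately yields that $H$ is a $\Pi$-embedded $n$-G-map, and the existence of the direct transformation follows from Theorem~\ref{theo:existence_transformation} provided the dangling condition holds for $m^*$.

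First I would verify that $r[m]$ is a well-formed rule over ${\cal G}^I$ in the sense of Definition~\ref{def:regle}: this is a structural bookkeeping step, following the remark after the instantiation definition that at each saturation step the graphs added to $L[m]$, $K[m]$, $R[m]$ share the same base, so the double inclusion $L[m] \hookleftarrow K[m] \hookrightarrow R[m]$ exists, and the label-definedness conditions of Definition~\ref{def:regle} are inherited from $r_{\cal T}$ (whose left-hand side and kernel are label-free by hypothesis, and whose right-hand side labels are evaluated into defined values). Next I would check the topological conditions: since saturation only glues in orbits of $G$ along inclusions, and $G$ is a G-map, the added parts are non-oriented and carry complete adjacent-arc structure; the non-orientation, adjacent-arcs and cycles conditions for $r[m]$ follow from those assumed for $r_{\cal T}$ together with the fact that the glued-in material comes verbatim from the consistent G-map $G$. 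I would also observe that $m^*$ satisfies the dangling condition: by construction $g_{Lsat}$ restricted to the saturating parts is (an isomorphic copy of) the identity into $G$, so no node of $m^*(L[m]) \setminus m^*(K[m])$ can be the source or target of an arc of $G$ outside $m^*(L[m])$ — any such arc would lie in an orbit already pulled into $L[m]$.

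The heart of the argument — and the main obstacle — is verifying the two embedding consistency preservation conditions of Theorem~\ref{theo:ebd_preserv} for $R[m]$. For the first condition (all nodes of an $<\!\!o\!\!>$-orbit of $R[m]$ carry the same $\pi$-embedding, defined or not), I would argue as follows: an $<\!\!o\!\!>$-orbit of $R[m]$ either sits entirely inside the evaluated image $eval_{\sigma_m}(R_{\cal T})$ of the scheme's right-hand side — in which case the property holds because $R_{\cal T}$ satisfies the embedding constraints of Definition~\ref{def:ebd_constr} by the rule-scheme hypothesis, and $eval_{\sigma_m}$ preserves equality of labels — or it is (partly) made of material glued in during saturation, which was copied from a genuine orbit of the G-map $G$, where the embedding constraint already holds; the key point is that saturation with $(\pi,v)$ glues in the \emph{entire} $<\!\!o\!\!>$-orbit $G<\!\!o\!\!>(m(v))$ with a \emph{uniform} $\pi$-label, namely $eval_{\sigma_m}(l_{R_{\cal T},V,\pi}(v))$ on the $\pi$-component and $\bot$ on the others, and that two saturation couples $(\pi,v)$, $(\pi,w)$ with $v \equiv_{R_{\cal T}<\!\!o\!\!>} w$ produce the same label since $l_{R_{\cal T},V,\pi}(v) = l_{R_{\cal T},V,\pi}(w)$ (again by the embedding constraint on $R_{\cal T}$), as noted in the remark after the instantiation definition. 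For the second condition (whenever a node $v$ of $R[m]$ is added or has its $\pi$-label changed, the orbit $R[m]<\!\!o\!\!>(v)$ is complete), the saturation mechanism is precisely designed to enforce this: a node whose $\pi$-embedding is modified belongs either to the original pattern $R_{\cal T}$, in which case the couple $(\pi,v)$ triggers gluing of the complete orbit $G<\!\!o\!\!>(m(v))$, or it was itself introduced by a previous saturation step gluing a complete orbit of $G$; in both cases the ambient orbit in $R[m]$ is closed under the $\alpha_i$ in $o$. The delicate bookkeeping here is to make sure that completing orbits for \emph{one} embedding operation $\pi$ does not secretly modify the $\pi'$-label of some node for a different operation $\pi'$ — this is handled by the fact that $b$ and $d$ glue in $Prod_{\pi' \in \Pi \setminus \pi} proj_{\pi'}(G<\!\!o\!\!>(m(v)))_\bot$ with \emph{undefined} $\pi'$-components, so the only $\pi'$-labels present are those already carried by nodes matched earlier, and Definition~\ref{def:regle} together with the pushout construction guarantees these are not overwritten. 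I would finish by invoking Theorem~\ref{theo:ebd_preserv} to conclude that $G \Rightarrow^{r[m],m^*} H$ produces a $\Pi$-embedded $n$-G-map $H$.
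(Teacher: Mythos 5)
The paper itself gives no proof of Theorem~\ref{theo:ebdGmap_preserv}: it defers entirely to the technical report~\cite{rapport}, so there is no in-paper argument to compare yours against line by line. That said, your overall strategy --- reduce to Theorem~\ref{theo:ebd_preserv} by checking that the instantiated rule $r[m]$ with match $m^*$ satisfies the topological conditions of Definition~\ref{def:ebd_preserv} and the two embedding conditions of Theorem~\ref{theo:ebd_preserv}, with the saturation mechanism supplying orbit completeness and the remark after the instantiation definition supplying label uniformity --- is exactly the reduction the paper's architecture is set up for, and the body of your argument on the embedding conditions is sound.

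One step as written does not hold up: the dangling condition. You argue that no node of $m^*(L[m])\setminus m^*(K[m])$ can be adjacent to an arc of $G$ outside $m^*(L[m])$ because ``any such arc would lie in an orbit already pulled into $L[m]$.'' But the saturation is indexed by $\Pi\times V_{K_{\cal T}}$, i.e.\ orbits are glued in only around \emph{preserved} nodes, whereas the nodes of $m^*(L[m])\setminus m^*(K[m])$ are precisely the images of the \emph{deleted} nodes of $L_{\cal T}\setminus K_{\cal T}$, around which nothing is saturated. The correct argument is the classical one: by the adjacent-arcs condition of Definition~\ref{def:ebd_preserv}, a removed node already carries all $n+1$ adjacent arcs $\alpha_0,\dots,\alpha_n$ in $L_{\cal T}$, and by the adjacent-arc constraint of Definition~\ref{def:topo_constr} its image in the G-map $G$ has \emph{exactly} $n+1$ adjacent arcs, so every arc of $G$ incident to it is already in $m^*(L[m])$. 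You should also say a word about why $m^*$ is injective (required by Definition~\ref{def:transformation_directe}): this holds because each saturation pushout glues a copy of a subgraph of $G$ along its common image in $G$, which keeps $g_{Lsat}$ injective whenever $m$ is. With these two repairs the reduction to Theorem~\ref{theo:ebd_preserv} goes through.
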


\begin{proof} The proof of this theorem can be found in the technical report~\cite{rapport}.\end{proof}

   \vspace{-0.5cm}
\section*{Conclusion}

In this article, in the context of topology-based geometric modelling, we have proposed a representation of  embedded $n$-dimensional objects as a particular class of $I$-labelled graphs. Nodes have as many labels as there are different kinds of data to represent the geometric embedding.
The category of $I$-labelled graphs is defined as a natural extension of the partially labelled graphs defined in \cite{Habel-Plump02}. Considering the modelling operations, we extend a rule-based language \cite{PACL2008_2394} used to define topological operations. We introduce embedding variables and expressions on rule node labels to deal with the computation of the embedding of constructed objects. The resulting language allows to define geometric operations in an easy and safe way, as constraints on rules ensure both topological and geometric consistency.

Moreover, we have already designed a first prototype of a topology-based geometric modeler, but only for pure topological operations described with rule schemes based on topological variables~\cite{Poudret09}. As previously mentioned, these variables allow us to define topological operations independently from the size of cells, that is, from the number of nodes constituting the cell to be filtered. For example, it allows us to define the topological triangulation of a triangle, a square, or any face with a single generic rule. The tool can be seen as a rule-application engine dedicated to our topological transformation rules~\cite{smi}. It allows us to quickly design and implement a modeler by specifying both its topological dimension and its set of application dedicated rules.  For usual topological operations, the prototype efficiency is comparable to  other topology-based geometric modelers based on G-maps. An unquestionable benefit of our approach is that topological operations can be quickly designed and implemented and that prototyped modelers are easily and safely extensible~\cite{smi}.  We are now extending this first prototype with embedding variables to deals with geometric operations. The combination of the two kind of variables has still to be formalized but the first developments attest of their compatibility.

   \vspace{-0.5cm}
\bibliographystyle{eptcs}
\bibliography{biblio}

\end{document}